\newcommand{\naturals}{{{\mathbb{N}}}}
\newcommand{\reals}{{{\mathbb{R}}}}
\newcommand{\calJ}{\mathcal{J}}
\newcommand{\calJS}{\mathcal{J_S}}
\newcommand{\calP}{\mathcal{P}}
\newcommand{\calR}{\mathcal{R}}
\newcommand{\calX}{\mathcal{X}}
\newcommand{\calF}{\mathcal{F}}
\newcommand{\calA}{\mathcal{A}}
\newcommand{\calT}{\mathcal{T}}
\newcommand{\dis}{\mathrm{dis}}
\newcommand\fixme[1]{}
\newcommand{\id}{{{\mathrm{id}}}}
\newcommand{\np}{{{\mathrm{NP}}}}
\newcommand{\pos}{{{{\mathrm{pos}}}}}
\newtheorem{theorem}{Theorem}
\newtheorem{definition}{Definition}
\newtheorem{example}{Example}
\newtheorem{proposition}[theorem]{Proposition}
\theoremstyle{plain}
\crefname{table}{Table}{Tables}
\crefname{figure}{Figure}{Figures}
\crefname{theorem}{Theorem}{Theorems}
\Crefname{theorem}{Thm.}{Thms.}
\crefname{definition}{Definition}{Definitions}
\crefname{corollary}{Corollary}{Corollaries}
\crefname{observation}{Observation}{Observations}
\crefname{lemma}{Lemma}{Lemmas}
\crefname{example}{Example}{Examples}
\crefname{reduction}{Reduction}{Reductions}
\crefname{construction}{Construction}{Constructions}
\crefname{subsection}{Subsection}{Subsections}
\crefname{section}{Section}{Sections}
\crefname{proposition}{Proposition}{Propositions}
\Crefname{proposition}{Prop.}{Props.}
\crefname{algorithm}{Algorithm}{Algorithms}
\newcommand{\gettikzxy}[3]{%
  \tikz@scan@one@point\pgfutil@firstofone#1\relax
  \edef#2{\the\pgf@x}%
  \edef#3{\the\pgf@y}%
}
\renewenvironment{description}
  {\list{}{\labelwidth=10pt \leftmargin=15pt
   }}
  {\endlist}
\setlist[enumerate,1]{leftmargin=14pt}
\setlist[enumerate,2]{leftmargin=16pt}
\newif\ifarxiv
\begin{document}
\sloppy
\allowdisplaybreaks

%%%%%%%%%%%%%%%%%%%%%%%%%%%%%%%%%%%%%%%%%%%%%%%%%%%%%%%%%%%%%%%%%%%%%%%%%%
%%%%%%%%%%%%%%%%%%%%%%%%%%%%%%%%%%%%%%%%%%%%%%%%%%%%%%%%%%%%%%%%%%%%%%%%%%
\title{Collective Schedules:\\Scheduling Meets Computational Social Choice}

\author{
  Fanny Pascual \\ 
Sorbonne Universit\'e, CNRS, LIP6\\
  Paris, France \\
  \texttt{fanny.pascual@lip6.fr}
\and 
  Krzysztof Rzadca\\
  University of Warsaw\\
  Warsaw, Poland\\
  \texttt{krz@mimuw.edu.pl}
\and
  Piotr Skowron\\
  University of Warsaw\\
  Warsaw, Poland\\
  \texttt{p.skowron@mimuw.edu.pl}
}

\maketitle

\begin{abstract}
% We introduce the notion of the \emph{collective schedule}, which describes the problem of scheduling public projects, to take into account the possibly conflicting preferences of a group of agents. 
% We show how to extend the tools from social choice to this model and we study computational complexity of some of the introduced methods.
% alternative version
  % When scheduling public works or 
  % events in a shared facility, one needs to accommodate possibly different views of a population on the  importance of scheduled works or events.
  % We formalize these problems by introducing the notion of a \emph{collective schedule}. We show how to extend selected tools from social choice theory---the Kemeny rule and the Condorcet principle---to our setting. We study the computational complexity of finding collective schedules.
  % We also perform simulations demonstrating that optimal collective schedules can be found for instances with realistic sizes.

  When scheduling public works or events in a shared facility
  one needs to accommodate preferences of a population.
  We formalize this problem by introducing the notion of a \emph{collective schedule}. We show how to extend fundamental tools from social choice theory---positional scoring rules, the Kemeny rule and the Condorcet principle---to collective scheduling.
  We study the computational complexity of finding collective schedules.
  We also experimentally demonstrate that optimal collective schedules can be found for instances with realistic sizes.
\end{abstract}

\section{Introduction}
Major public infrastructure projects, such as extending the city subway system, are often phased. As workforce, machines and yearly budgets are limited, phases have to be developed one by one.
Some phases are inherently longer-lasting than others.
Moreover, individual citizens have different preferred orders of phases. Should the construction start with a long phase with a strong support, or rather a less popular phase, that, however, will be finished faster?
If the long phase starts first, the citizens supporting the short phase would have to wait significantly longer.
% Consider local authorities interested in constructing public infrastructure, such as a cycle path, a kindergarten, or a swimming pool. Different citizens have different views on how important are various projects, and these preferences need to be accommodated in the authorities' decisions. Since the money coming from the taxes is limited, the authorities cannot afford to start all projects simultaneously, but rather need to decide on the order in which the projects should be run. Some projects can be highly desirable but quite expensive, and thus require-ING substantial founding, while others can be relatively less popular, but also short and fairly inexpensive.
Consider another example: planning events in a single lecture theater for a large, varied audience. The theater needs to be shared among different groups.
% There is only one theater, thus just a single event can happen at any time.
Some events last just a few hours, while others multiple days. What is the optimal schedule?
% public works:  different durations, varying societal support (new subway line: large impact, large duration; vs smaller scale tram line that is shorter to do, but less impact in the society).
% but public works have budgets, I don't want to talk about budgeting here
We formalize these and similar questions by introducing the notion of a \emph{collective schedule}, a plan that takes into account both jobs' durations and their societal support. 
The central idea stems from the observation that the problem of finding a socially optimal collective schedule is closely related to the problem of aggregating agents' preferences, one of the central problems studied in social choice theory~\citep{arrow2010handbook}.
However, differences in jobs' lengths have to be explicitly considered.
Let us illustrate these similarities through the following example.

Consider a collection of jobs all having the same duration.
%and all requiring the same effort, money, etc., to be successfully completed.
The jobs have to be processed sequentially (one by one).
Different agents might have different preferred schedules of processing these jobs.
%on the preferred ordering of these jobs.
%Assume that 
% how important are particular jobs.
%Assume that for each agent we know her preferred schedule of processing these jobs.
Since each agent would like all the jobs to be executed as soon as possible, the preferred schedule of each agent does not contain ``gaps'' (idle times), and so, such a preferred schedule can be viewed as an order over the set of jobs, and can be interpreted as a preference relation. Similarly, the resulting collective schedule can be viewed as an aggregated preference relation.
From this perspective, it is natural to apply tools from social choice theory to find a socially desired collective schedule.

Yet, the tools of social choice cannot be always applied directly. The scheduling model is typically much richer, and contains additional elements. In particular, when jobs' durations vastly differ, these differences must be taken into account when constructing a collective schedule. For instance, imagine that we are dealing with two jobs---one very short, $J_s$, and one very long, $J_l$. Further, imagine that 55\% of the population prefers the long job to be executed first and that the remaining 45\% has exactly opposite preferences. If we disregard the jobs' durations, then perhaps every decision maker would schedule $J_l$ before $J_s$.
However,
%On the other hand,
%taking into account the great disproportion in the durations,
% it is intuitively more beneficial to process these jobs in the reverse order, as $J_l$ would be just slightly delayed.
starting with $J_s$ affects 55\% of population just slightly (as $J_l$ is just slightly delayed compared to their preferred schedules). In contrast, starting with $J_l$ affects 45\% of population significantly (as $J_s$ is severely delayed).

\subsection{Overview of Our Contributions}
We explore the following question: How can we meaningfully apply the classic tools from social choice theory to find a collective schedule?
%Conversely, we also discuss how certain tools and notions from the scheduling theory can be used in the (computational) social choice.
The key idea behind this work is to use fundamental concepts from both fields to highlight the new perspectives.

% In social choice, agents' preferences can be aggregated through various mechanisms.
Scheduling offers an impressive collection of models, tools and algorithms which can be applied to a broad class of problems. %Clearly,
It is impossible to cover all of them in a single work.
% This work can be viewed as the first step towards a better understanding of how the scheduling and the computational social choice theories interact.
% krz to others: I delete all "first steps" and "initial" etc. not to give reject arguments to reviewers
We use perhaps the most fundamental (although still non-trivial) scheduling model: a single processor executing a set of independent jobs.
This model is already rich enough to describe significant real-world problems (such as the public works or the lecture theater introduced earlier).
At the same time, such a model, fundamental, well-studied and stripped from orthogonal issues, enables us to highlight the new elements brought by social choice.

Similarly, we focus on three well-known and extensively studied tools from social choice theory: positional scoring rules, the Kemeny rule and the Condorcet principle.
Under \emph{a positional scoring rule} the score that an object receives from an agent is derived only on the basis of the position of this object in the agent's preference ranking; the objects are then ranked in the descending order of their total scores received from all the agents. 
The \emph{Kemeny rule} uses the concept of distances between rankings. It selects a ranking which minimizes the sum of the swap distances to the preference rankings of all the agents. The \emph{Condorcet principle} states that if there exists an object that is preferred to any other object by the majority of agents, %(but it can be a different majority for different objects),
then this object should be put on the top of the aggregated ranking.
% The Condorcet principle can be generalized to apply to other than the top position in the output rankings:
The Condorcet principle can be generalized to the remaining ranking positions.
Assume that the graph of the preferences of the majority of agents is acyclic, i.e., there exists no such a sequence of objects $o_1, \ldots, o_{\ell}$ that $o_1$ is preferred by the majority of agents to $o_2$, $o_2$ to $o_3$, $\ldots$, $o_{\ell-1}$ to $o_{\ell}$ and $o_{\ell}$ to $o_1$.
Whenever an object $o$ is preferred by the majority of agents to another object $q$, $o$ should be put before $q$ in the aggregated ranking.   

Naturally, these three notions
%, the Kemeny rule and the Condorcet principle,
can be directly applied to find a collective schedule. Yet, as we argued in our example with a long and a short job, this can lead to intuitively suboptimal schedules, because they do not consider significantly different processing times. 
%To measure how far is a proposed order from a preferred order, the classic Kemeny rule relies on measuring the number of swaps needed to go from proposed to the preferred.
%In scheduling, however, individual swaps should have significantly different impact on voters' satisfaction; this impact depends on the lengths of jobs being swapped.
%\fixme{find a short example where a schedule that is Kemeny-optimal is not good, because some large job delays some short jobs}
%Consider a set of two short, A, B,  and one long, C job; if a voter prefers a schedule (A,B,C), a schedule (C,A,B) is worse than a schedule (B,A,C), as the voter has to wait much longer for its most preferred job to start.
We propose extensions of these tools to take into account lengths of the jobs. We also analyze their computational complexity.
%In particular, we analyze the computational complexity of deriving optimal schedules according to the so defined methods.

\subsection{Related Work}
\textbf{Scheduling:} The two most related scheduling models apply concepts from game theory and multiagent optimization. The selfish job model~\citep{worstCaseEqulibria,vocking_algorithmic_2007} assumes that each job has a single owner trying to minimize its completion time and that the jobs compete for processors. The multi-organizational model~\citep{dutot_approximation_2011} assumes that a single organization owns and cares about multiple jobs. Our work complements these with a third perspective: not only each job has multiple ``owners'', but also they care about all jobs (albeit to a different degree).

In multiagent scheduling \citep{agnetis2014multiagent}, agents have different optimization goals (e.g., different functions or weights). The system's objective is to find all Pareto-optimal schedules, or a single Pareto-optimal schedule (optimizing one agent's goal with constraints on admissible values for other goals).
In contrast, %in collective scheduling
our aim is to %look at the agents globally and to
%use social choice
propose rules allowing to construct a single, compromise schedule.
This compromise stems from social choice methods and tools.
%provide tools for resolving conflicting preferences of the agents,
%to construct a single, compromise schedule that satisfies social choice properties.
%Such conflicts are common in a broad spectrum of Pareto optimal solutions.
Moreover, our setting is motivated by problems in which the number of agents is large. To the best of our knowledge, the existing literature on multiagent scheduling focuses on cases with a few (e.g. two) agents.

%For instance, $(\sum w_i^{(a)} C_i, \sum w_i^{(b)} C_i)$: two agents, each has a set of weights over all jobs.
%or to find a single Pareto-optimal point, e.g., minimizing $\sum w_i^{(a)} C_i$ with a constraint on $\sum w_i^{(a)} C_i$.

\medskip
\noindent\textbf{Computational social choice:} For an overview of tools and methods for aggregating agents' preferences see the book of~\citet{arrow2010handbook}. \citet{fis:weighted-tournament} overview the computational complexity of finding Kemeny rankings. \citet{car:dodgson-young} discuss computational complexity of finding winners according to a number of %some
Condorcet-consistent methods. 

% In this work
Typically in social choice, an aggregated ranking is created to establish the collective preference relation, and to eventually select a single best alternative (sometimes with a few runner-ups). Thus, the agents usually do not care what is the order of the candidates in the further part of the collective ranking. In our model the agents are interested in the whole output rankings.
We can thus implement fairness---the agents who are dissatisfied with an order in the beginning of a collective schedule might be compensated in the further part of the schedule. Thus, our approach is closer to the recent works of~\citet{proprank} and~\citet{CSV17} analyzing fairness of collective rankings. 
%\citet{bar-tov-tri:j:who-won} showed that computing a Kemeny ranking is $\np$-hard. \citet{dwork2001rank} proved hardness even for four voters. \citet{ken-sch:c:kemeny-few-errors} gave a PTAS for the problem  (the authors also thoroughly discuss previously known approximation algorithms for the Kemeny rule). Further, the problem is relatively easy from the perspective of parameterized complexity theory~\citep{bet-fel-guo-nie-ros:j:fpt-kemeny-aaim, Karpinski2010}. The computational complexity of many Condorcet-consistent methods has been extensively studied.  

%Our problems can be reduced to classic computational social choice problems by setting all durations to 1. 

In participatory budgeting \citep{cabannes:participatory, knapsackVoting, pbp, conf/wine/FainGM16, aaai/BenadeNP017} agents express preferences over projects which have different costs. The goal is to choose a socially-optimal set of items with a total cost not exceeding the budget. Thus, in a way, participatory budgeting extends the knapsack problem similarly to how we extend scheduling.
% Participatory budgeting might be viewed as applying social choice to knapsack-type combinatorial optimization problems:

\section{The Collective Scheduling Model}\label{sec:model}
We use standard scheduling notations and definitions from the book of~\citet{brucker2006scheduling}, unless otherwise stated. For each integer $t$, by $[t]$ we denote the set $\{1, \ldots, t\}$.
Let $N = [n]$ be the set of $n$ agents (voters) and let $\calJ = \{J_1, \ldots, J_m\}$ be the set of $m$ jobs (note that in scheduling $m$ is typically used to denote the number of machines; we deliberately abuse this notation as our results are for a single machine).
For a job $J_i$ by $p_i \in \naturals$ we denote its processing time (also called duration or size), i.e., the number of time units $J_i$ requires to be completed.
We consider an off-line problem, i.e., jobs $\mathcal{J}$ are known in advance.
Jobs are ready to be processed (there are no release dates).
For each job $J_i$ its processing time $p_i$ is known in advance (\emph{clairvoyance}, a standard assumption in the scheduling theory).
Once started, a job cannot be interrupted until it completes (we do not allow for  \emph{preemption} of the jobs).
% Each job requires a single machine to be processed (\emph{sequential jobs}).

There is a single machine that executes all the jobs.
%(see Introduction for the motivation and Section~\ref{sec:conclusions} for extensions).
A schedule $\sigma\colon \calJ \to \naturals$ is a function that assigns to each job $J_i$  its start time $\sigma(J_i)$,
such that no two jobs $J_k, J_\ell$ execute simultaneously.
%In the whole paper, we assume that there is a single machine in the system (a single processor)
% that executes all the jobs,
Thus, either $\sigma(J_k) \geq \sigma(J_\ell) + p_\ell$ or $\sigma(J_\ell) \geq \sigma(J_k) + p_k$.
By $C_i(\sigma)$ we denote the completion time of job $J_i$: $C_i(\sigma) = \sigma(J_i) + p_i$.
We assume that a schedule has no gaps: for each job $i$, except the job that completes as the last one, there exists job $j$ such that $C_i(\sigma)=\sigma(J_j)$.
Let $\mathscr{S}$ denote the set of all possible schedules for the set of jobs $\calJ$.

Each agent wants all jobs to be completed as soon as possible, yet agents differ in their views on the relative importance of the jobs.
We assume that each agent $a$ has a certain preferred schedule $\sigma_a \in \calJ$, and when building $\sigma_a$, an agent is aware of the processing times of the jobs.
%  Clearly, the most accurate way would be to ask the agents for their preferences over all possible schedules; however this approach would require the agents to provide information exponential in size, which would be very impractical.}
%As we argued in the introduction, 
%We stress that agents have preferences over schedules, rather than simply jobs.
In particular, $\sigma_a$ does not have to directly correspond to the relative importance of jobs. % from the perspective of $a$.
For instance, if in $\sigma_a$ a short job $J_s$ precedes a long job $J_{\ell}$, then this does not necessarily mean that $a$ considers $J_s$ more important than $J_{\ell}$.
$a$ might consider $J_{\ell}$ more important, but she might prefer a marginally less important job $J_s$ to be completed sooner as it would delay $J_{\ell}$ only a bit.

A schedule can be encoded as a
%a preference relation over the set of jobs.   Thus, we will also use $\sigma_a$ as a
(transitive, asymmetric) binary relation: $J_i \; \sigma_a \; J_k \Leftrightarrow \sigma_a(J_i) < \sigma_a(J_k)$.
E.g., $J_1 \;\sigma_a\; J_2 \;\sigma_a\; \ldots \;\sigma_a\; J_m$ means that
agent~$a$ wants $J_1$ to be processed first, $J_2$ second, and so on. We will denote such a schedule as $(J_1, J_2, \ldots, J_m)$.

We call a vector of preferred schedules, one for each agent, a \emph{preference profile}. By $\mathscr{P}$ we denote the set of all preference profiles of the agents.
A \emph{scheduling rule} $\calR \colon \mathscr{P} \to \mathscr{S}$ is a function which takes a preference profile as an input and returns a collective schedule.

%Below we discuss different ways in which agents' preferences can be accommodated when determining the collective schedule.

%Using the standard 3-field scheduling notation, we study the problem $1|| F(\{ \sigma_a \})$, where $F$ is a cost function that depends on all agents' preferences. In the next subsections we use the tools of  social choice theory to define various cost functions $F$.

%We assume that the agents when providing their preferred schedules are aware of the processing times of the jobs. E.g., if in a preferred schedule of an agent $a$ a short job $J_s$ precedes a long job $J_{\ell}$, then this does not necessarily mean that $a$ considers $J_s$ more important than $J_{\ell}$: it might be the case that $a$ considers $J_{\ell}$ more important but she weighted this against the fact that scheduling $J_{s}$ earlier will delay $J_{\ell}$ only a bit. In the remaining part of this section we will discuss several approaches to formalizing the way in which voters evaluate schedules based on their preferred schedules. \footnote{Clearly, the most accurate way would be to ask the agents for their preferences over all possible schedules; however this approach would require the agents to provide information exponential in size, which would be very impractical.}

In the remaining part of this section we propose different methods in which
the preference profile is used to evaluate a proposed collective schedule $\sigma$ (and thus, to construct a scheduling rule~$\calR$). 
All the proposed methods extrapolate information from $\sigma_a$ (a preferred schedule) to evaluate $\sigma$. 
Such an extrapolation is common in social choice: in participatory budgeting it is typical to ask each agent to provide a single set of items~\cite{cabannes:participatory, knapsackVoting, pbp, aaai/BenadeNP017} (instead of preferences over sets of items); similarly in multiwinner elections, each agent provides separable preferences of candidates~\cite{sko-fal-lan:j:collective, FSST-trends}. Alternatively, we could ask an agent to express her preferences over all possible schedules. This approach is also common in other areas of social choice (e.g., in voting in combinatorial domains model~\cite{LangXia15}), yet it requires eliciting exponential information from the agents. There exist also middle ground approaches, using specifically designed languages, such as CP-nets, for expressing preferences.

\subsection{Scheduling by Positional Scoring Rules}\label{sec:psf_functions}

In the classic social choice, positional scoring rules are perhaps the most straightforward, and the most commonly used in practice, tools to aggregate agents' preferences. Informally, under a positional scoring rule each agent $a$ assigns a score to each candidate $c$ (a job, in our case), which depends only on the position of $c$ in $a$'s preference ranking. For each candidate the scores that she receives from all the agents are summed up, and the candidates are ranked in the descending order of their total scores.

There is a natural way to adapt this concept. For an increasing function $h \colon \naturals \to \reals$ and a job~$J$ we define the $h\text{-}\mathrm{score}$ of~$J$ as the total duration of jobs scheduled after $J$ in all preferred schedules:
\begin{align*}
h\text{-}\mathrm{score}(J) = \sum_{a \in N}f\left( \sum_{J_i\colon J \, \sigma_a \, J_i} p_i \right) \text{.}
\end{align*}

The $h$-psf-rule (psf for positional scoring function) schedules the jobs by their descending $h$-scores.
If jobs are unit-size ($p_i=1$), then $h\text{-}\mathrm{score}(J)$ is simply the score that $J$ would get from the classic positional scoring rule induced by $h$.
For an identity function $h_{\id}(x) = x$, % seems the most natural choice for $h$. In such case,
the $h_{\id}$-psf-rule corresponds to the Borda voting method adapted to collective scheduling.

The so-defined scheduling methods differ from traditional positional scoring rules, by taking into account the processing times of the jobs:
\begin{enumerate}
\item A score that a job $J$ receives from an agent $a$ depends on the total processing time rather than on the number of jobs that $J$ precedes in schedule~$\sigma_a$.
\item When scoring a job~$J$ we sum the duration of jobs scheduled \emph{after} $J$, rather than before it. This implicitly favors jobs with lower processing times. Indeed, consider two preferred schedules, $\sigma$ and $\tau$ identical until time $t$, at which a long job $J_{\ell}$ is scheduled in $\sigma$, and a short job $J_{s}$ is scheduled in $\tau$. Since $J_{s}$ is shorter, the total size of the jobs succeeding $J_s$ in $\tau$ is larger than the total size of the jobs succeeding $J_{\ell}$ in $\sigma$. Consequently, $J_s$ gets a higher score from $\tau$ than $J_{\ell}$ gets from $\sigma$. 
\end{enumerate}

However, this implicit preference for short jobs seems insufficient, as illustrated by the following example.

\begin{example}\label{ex:psf_bad}
Consider three jobs, $J_{\ell,1}, J_{\ell,2}, J_{s}$, with the processing times $\ell$, $\ell$, and $1$, respectively. Assume that $\ell \gg 1$, and consider the following preferred schedules of agents:
\begin{align*}
\hspace{1.5cm}\nicefrac{3n}{8} + \epsilon \text{~of agents~} &\colon\quad J_{\ell,1} \; &&\sigma \; && J_{\ell,2} \; &&\sigma \; && J_{s} \\
\nicefrac{3n}{8} + \epsilon \text{~of agents~} &\colon\quad J_{\ell,2} \; &&\sigma \; && J_{\ell,1} \; &&\sigma \; && J_{s} \\
\nicefrac{n}{8} - \epsilon \text{~of agents~} &\colon\quad J_{s} \;          &&\sigma \; && J_{\ell,1} \; &&\sigma \; && J_{\ell,2} \\
\nicefrac{n}{8} - \epsilon \text{~of agents~} &\colon\quad J_{s} \;          &&\sigma \; && J_{\ell,2} \; &&\sigma \; && J_{\ell,1}
\end{align*}
\noindent By $h_{\id}$-psf-rule, $J_{\ell,1}$ and $J_{\ell,2}$ are scheduled before~$J_{s}$.
However, starting with $J_s$ would delay $J_{\ell,1}$ and $J_{\ell,2}$ by only one time unit, while starting with $J_{\ell,1}$ and $J_{\ell,2}$ delays $J_s$ by $2\ell$, an arbitrarily large value. Moreover, $J_s$ is put first by roughly $\nicefrac{1}{4}$ of agents, a significant fraction.
% Thus, in the above example it feels that $h_{\id}$-psf-rule discriminates short jobs.
\end{example}

Example~\ref{ex:psf_bad} demonstrates that the pure social choice theory does not offer tools appropriate for collective scheduling (we will provide more arguments to support this statement throughout the text). 
To address such issues we propose an approach that builds upon social choice \emph{and} the scheduling theory.

\subsection{Scheduling Based on Cost Functions}\label{sec:cost_functions}
%The Kemeny rule uses the concept of a distance between rankings: we extend it to collective schedules by the notion of a cost function $f$.
A cost function quantifies how a given schedule $\tau$ differs from an agent's preferred schedule $\sigma$.
In this section, we adapt to our model classic costs used in scheduling and in social choice. We then show how to aggregate these costs among agents in order to produce a single measure of a quality of a schedule. This approach allows us to construct a family of scheduling methods that, in some sense, extend the classic Kemeny rule.

Formally, a cost function $f$
maps a pair of schedules, $\tau$ and $\sigma$, to a non-negative real value.
We analyze the following cost functions.
Below, $\tau$ denotes a collective schedule the quality of which we want to assess; while $\sigma$ denotes the preferred schedule of a single agent.

\subsubsection{Swap Costs.}
These functions take into account only the orders of jobs in the two schedules (ignoring the processing times), thus directly correspond to costs from social choice.
    \begin{enumerate}
\item The \citet{Kendall1938measure} tau (or swap) distance (\textbf{K}), measures the number of swaps of adjacent jobs to turn one schedule into another one. We use an equivalent definition that counts all pairs of jobs executed in a non-preferred order:
 \begin{align*}
  \textstyle K(\tau, \sigma) = \Big| \big\{ (k, \ell) \colon J_k \; \tau \; J_\ell \; \text{and} \; J_\ell \; \sigma \; J_k \big\} \Big| \text{.}
  % krz to Piotr: I inlided math here, as these definitions are classic and not as important as the delay costs. Piotr to krz: right, but aligned math looks more esthetic :)
 \end{align*}
\item Spearman distance (\textbf{S}). Let $\pos(J, \pi)$ denote the position of job $J$ in a schedule $\pi$, i.e., the number of jobs scheduled before $J$ in $\pi$. The Spearman distance is defined as:
 \begin{align*}
 \textstyle S(\tau, \sigma) = \sum_{J \in \calJ} \big|\pos(J, \sigma) - \pos(J, \tau)\big| \text{.}
 \end{align*}
\end{enumerate}

\subsubsection{Delay Costs.}
These functions use the \emph{completion times} $\{ C_i(\sigma)\colon J_i \in \calJ \}$ of jobs in the preferred schedule $\sigma$ (and thus, indirectly, jobs' lengths). The completion times form jobs' due dates, $d_i=C_i(\sigma)$. A delay cost then quantifies how far are the proposed completion times $\{c_i = C_i(\tau)\colon J_i \in \calJ \}$ from their due dates $\{ d_i \}$ by one of the six classic criteria defined in~\citet{brucker2006scheduling}:
\begin{description}
  \item[Tardiness (\textbf{T})] $T(c_i, d_i)= \max (0, c_i - d_i)$.
    %$T(\tau, \sigma) = \sum_i \max \big(0, C_i(\tau) - C_i(\sigma)\big)$ \text{.}
  \item[Unit penalties (\textbf{U})] how many jobs are late:
  \begin{align*}
  U(c_i, d_i) =
  \begin{cases}
    1       & \quad \text{if } c_i > d_i\\
    0       & \quad \text{otherwise.}
  \end{cases}
  \end{align*}
  \item[Lateness (\textbf{L})] is similar to tardiness, but includes a bonus for being early: $L(c_i, d_i) = c_i - d_i$.
  \item[Earliness (\textbf{E})] $E(c_i, d_i) = \max (0, d_i - c_i )$.
  \item[Absolute deviation (\textbf{D})] $D(c_i, d_i) = |c_i - d_i|$.
  \item[Squared deviation (\textbf{SD})] $\mathit{SD}(c_i, d_i) = (c_i - d_i)^2$.
\end{description}
\smallskip

% Costs functions formalize the way in which agents judge schedules---they can be viewed as methods for extracting information encoded in preferred schedules to evaluate any schedule. \fixme{Krz to Piotr: I'd remove this sentence; I prefer not to talk about preferences over jobs; and all the methods, including Condorcet, basically infer an agent's preferences over all schedules from a single schedule. Piotr: how about now? Krz: but isn't this true for all our approaches? even with psf, we use a single preferred schedule to judge any other schedule.
% KRZ: I remove this sentence, as we provide this motivation just before 2.1}

Each such a criterion $f \in \{T, U, L, E, D, \mathit{SD}\}$ naturally induces the corresponding delay cost of an agent, $f(\tau, \sigma)$: 
\begin{align*}
f(\tau, \sigma) = \sum_{J_i \in \calJ}f\Big(C_i(\tau), C_i(\sigma)\Big) \text{.}
\end{align*}

% \fixme{F: I have added the following paragraph. Please check whether it is better to keep it or not (it is strange to study the earliness criteria which seems to be contradictory with the fact that when we said that the agents want the tasks to be completed as soon as possible. On the other hand, no referee complained about this.) }
% The choice of one of these six delay costs functions depends on the context.
In this work, we mostly focus on the tardiness $T$, which is both easy to interpret for our motivating examples and the most extensively studied in scheduling. However, there is interest to study the remaining functions as well.
$U$ and $L$ are similar to $T$---the sooner a task is completed, the better.
% The earliness $E$, on the contrary, is useful if each task represents a (collective) work to do by the agents (workers). An agent does not want to work before the her preferred start time.
% Finally, $D$ and $\mathit{SD}$ should be used when an agent wants each task to be executed exactly at the preferred time. % (and not sooner or later).
% In this case, the start times in the collective schedule should be close to the preferred schedules.
The remaining three measures ($E, S$, and $\mathit{SD}$) penalize the jobs which are executed before their ``preferred times''. However, each job when executed earlier makes other jobs executed later (e.g., after their due times). 
  Thus, these penalties quantify the unnecessary (wasted) promotion of jobs executed too early (causing other jobs being executed too late).\footnote{The considered metrics have their natural interpretations also in other more specific settings. E.g., the earliness $E$ is useful if each task represents a (collective) work to be done by the agents (workers) and when agents do not want to work before their preferred start times. Similarly, $D$ and $\mathit{SD}$ can be used when an agent wants each task to be executed exactly at the preferred time.}
By restricting the instances to unit-size jobs, %($p_i=1$)
we can relate delay and swap costs.
The Spearman distance $S$ has the same value as the absolute deviation $D$ (by definition), and twice that of $T$: 
\begin{proposition}\label{prop:TandS}
For unit-size jobs it holds that $S(\sigma, \tau) = 2T(\sigma, \tau)$, for all schedules $\sigma, \tau$.
\end{proposition}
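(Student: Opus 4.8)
The plan is to reduce everything to a statement about the \emph{positions} of jobs, using the fact that for unit-size jobs a job's completion time and its position differ only by an additive constant. Concretely, if $\pos(J,\pi)$ jobs precede $J$ in a schedule $\pi$, then---since every job occupies exactly one time unit and the schedule has no gaps---$J$ starts at time $\pos(J,\pi)$ and completes at $C_J(\pi)=\pos(J,\pi)+1$. Hence, for any two schedules, the completion-time deviation of a job equals its positional deviation: $C_J(\sigma)-C_J(\tau)=\pos(J,\sigma)-\pos(J,\tau)$. Plugging this into the definition of $S$ immediately yields $S(\sigma,\tau)=\sum_{J\in\calJ}\bigl|C_J(\sigma)-C_J(\tau)\bigr|$, which is the remark that Spearman coincides with the absolute deviation $D$ on unit-size instances.

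Next I would exploit that $\sigma$ and $\tau$ are both orderings of the \emph{same} $m$ jobs, so they realize the same multiset of positions $\{0,1,\dots,m-1\}$. Consequently $\sum_{J\in\calJ}\pos(J,\sigma)=\sum_{J\in\calJ}\pos(J,\tau)$, and setting $x_J=\pos(J,\sigma)-\pos(J,\tau)$ gives the balance identity $\sum_{J\in\calJ}x_J=0$. Splitting this sum into its positive and negative parts shows that the total of the positive deviations equals the total magnitude of the negative deviations, i.e. $\sum_{J:\,x_J>0}x_J=\sum_{J:\,x_J<0}|x_J|$.

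Finally I would connect the two halves to $T$. Tardiness counts only the jobs that are late in $\sigma$ relative to $\tau$, so $T(\sigma,\tau)=\sum_{J\in\calJ}\max\!\bigl(0,\,C_J(\sigma)-C_J(\tau)\bigr)=\sum_{J:\,x_J>0}x_J$, exactly the sum of the positive deviations. Combining this with the balance identity gives $S(\sigma,\tau)=\sum_{J}|x_J|=\sum_{x_J>0}x_J+\sum_{x_J<0}|x_J|=2\sum_{x_J>0}x_J=2T(\sigma,\tau)$, as claimed. A pleasant by-product worth recording is that $T$ is symmetric on unit-size instances, since both $T(\sigma,\tau)$ and $T(\tau,\sigma)$ equal $\tfrac12 S(\sigma,\tau)$.

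There is essentially no hard obstacle here: the result is a short counting identity. The only points demanding care are the first step---justifying $C_J(\pi)=\pos(J,\pi)+1$ from the unit-size and no-gaps assumptions---and keeping the bookkeeping of which deviations are positive versus negative consistent with the asymmetric definition of $T$, so that the positive part is correctly identified with tardiness.
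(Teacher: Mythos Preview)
Your proposal is correct and follows essentially the same argument as the paper: both proofs express tardiness in terms of positional differences, invoke the balance identity $\sum_J\bigl(\pos(J,\sigma)-\pos(J,\tau)\bigr)=0$ from the fact that $\sigma$ and $\tau$ realize the same multiset of positions, and conclude that the positive and negative deviations have equal total magnitude, whence $S=2T$. Your write-up is slightly more explicit in justifying $C_J(\pi)=\pos(J,\pi)+1$ and in noting the symmetry $T(\sigma,\tau)=T(\tau,\sigma)$ as a corollary, but the substance is the same.
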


\begin{proof}
Observe that for unit-size jobs the tardiness measure can be expressed as:
\begin{align*}
T(\tau, \sigma) = \sum_{\mathclap{J\colon \pos(J, \tau) > \pos(J, \sigma)}} \Big(\pos(J, \tau) - \pos(J, \sigma)\Big) \text{,}
\end{align*}
Since $\sum_{J} \pos(J, \tau) = \sum_{J} \pos(J, \sigma)$ we get that:
\begin{align*}
\; 0 &= \sum_{J} \Big(\pos(J, \tau) - \pos(J, \sigma)\Big) \\
  &=  
    \sum_{\mathclap{J\colon \pos(J, \tau) > \pos(J, \sigma)}} \Big(\pos(J, \tau) - \pos(J, \sigma)\Big) + \sum_{\mathclap{J\colon \pos(J, \tau) < \pos(J, \sigma)}} \Big(\pos(J, \tau) - \pos(J, \sigma)\Big) +\\
  &\; + \sum_{\mathclap{J\colon \pos(J, \tau) = \pos(J, \sigma)}} \Big(\pos(J, \tau) - \pos(J, \sigma)\Big) \\
  &= \sum_{\mathclap{J\colon \pos(J, \tau) > \pos(J, \sigma)}} \Big(\pos(J, \tau) - \pos(J, \sigma)\Big) + \sum_{\mathclap{J\colon \pos(J, \tau) < \pos(J, \sigma)}} \Big(\pos(J, \tau) - \pos(J, \sigma)\Big) \text{.}
\end{align*}
Thus:
\begin{align*}
\quad\; \sum_{\mathclap{J\colon \pos(J, \tau) > \pos(J, \sigma)}} \Big|\pos(J, \tau) - \pos(J, \sigma)\Big| = \sum_{\mathclap{J\colon \pos(J, \tau) < \pos(J, \sigma)}} \Big|\pos(J, \tau) - \pos(J, \sigma)\Big| \text{.}
\end{align*}
And, consequently:
\begin{align*}
 S(\tau, \sigma) &= \sum_{J} \Big|\pos(J, \tau) - \pos(J, \sigma)\Big| \\
                 &= \sum_{\mathclap{J\colon \pos(J, \tau) > \pos(J, \sigma)}} \Big(\pos(J, \tau) - \pos(J, \sigma)\Big) + \\
  &\; + \sum_{\mathclap{J\colon \pos(J, \tau) < \pos(J, \sigma)}} \Big(\pos(J, \tau) - \pos(J, \sigma)\Big)\\
                 &= 2 \sum_{\mathclap{J\colon \pos(J, \tau) > \pos(J, \sigma)}} \Big(\pos(J, \tau) - \pos(J, \sigma)\Big) = 2T(\tau, \sigma) \text{.}
\end{align*}
This completes the proof.
\end{proof}

Since different agents can have different preferred schedules, in order to score a proposed schedule $\tau$ we need to aggregate the costs across all agents.
We will consider three classic aggregations:
\begin{description}
\item[The sum ($\Sigma$):] $\sum_{a \in N} f(\tau, \sigma_a)$, a utilitarian aggregation.
\item[The max:] $\max_{a \in N} f(\tau, \sigma_a)$, an egalitarian %preference
  aggregation.
\item[The $L_p$ norm ($L_p$):]
  $\sqrt[p]{\sum_{a \in N} \big(f(\tau, \sigma_a)\big)^p}$,
% $\Big(\sum_{a \in N} \big(f(\tau, \sigma_a)\big)^p\Big)^{1/p}$
  with a parameter~\mbox{$p \geq 1$}. The $L_p$ norms form a spectrum of aggregations between the sum ($L_1$) and the max ($L_\infty$).
\end{description}

For a cost function $f \in \{K, S, T, U, L, E, D, \mathit{SD}\}$ and an aggregation $\alpha \in \{\Sigma, \max, L_p\}$, by $\alpha$-$f$ we denote a scheduling rule returning a schedule that minimizes the $\alpha$-aggregation of the $f$-costs of the agents. In particular, for unit-size jobs the $\Sigma$-$T$ rule is equivalent to $\Sigma$-$S$ and to $\Sigma$-$D$, and $\Sigma$-$K$ is simply the Kemeny rule.

Scheduling based on cost functions avoids the problems exposed by Example~\ref{ex:psf_bad} (indeed for that instance, e.g., the $\Sigma$-$T$ rule starts with the short job $J_s$). Additionally, these methods satisfy some naturally-appealing axiomatic properties, such as reinforcement, which is a particularly natural requirement in our case.

\begin{definition}[Reinforcement]
A scheduling rule $\calR$ satisfies reinforcement iff for any two groups of agents $N_1$ and $N_2$, a schedule $\sigma$ is selected by $\calR$ both for $N_1$ and for $N_2$, then it should be also selected for the joint instance $N_1 \cup N_2$.  
\end{definition}

\begin{proposition}
All $\Sigma$-$f$ scheduling rules satisfy reinforcement.
\end{proposition}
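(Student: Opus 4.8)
The plan is to exploit the additive structure of the $\Sigma$-aggregation over the agent set, so that the statement reduces to a one-line inequality. For any set of agents $M$ and any schedule $\tau$, write $\mathrm{cost}_M(\tau) = \sum_{a \in M} f(\tau, \sigma_a)$; by definition the $\Sigma$-$f$ rule selects exactly the schedules in $\arg\min_\tau \mathrm{cost}_M(\tau)$. The key observation is that when $N_1$ and $N_2$ are disjoint, the cost decomposes as a sum, $\mathrm{cost}_{N_1 \cup N_2}(\tau) = \mathrm{cost}_{N_1}(\tau) + \mathrm{cost}_{N_2}(\tau)$, which is immediate from the definition of the $\Sigma$-aggregation as a sum over agents.

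First I would assume that $\sigma$ is selected by the rule for both $N_1$ and $N_2$, i.e.\ that $\mathrm{cost}_{N_1}(\sigma) \le \mathrm{cost}_{N_1}(\tau)$ and $\mathrm{cost}_{N_2}(\sigma) \le \mathrm{cost}_{N_2}(\tau)$ for every schedule $\tau$. Adding the two inequalities and applying the decomposition yields $\mathrm{cost}_{N_1 \cup N_2}(\sigma) \le \mathrm{cost}_{N_1 \cup N_2}(\tau)$ for every $\tau$, so $\sigma$ lies in the minimizer set for $N_1 \cup N_2$ and is therefore selected for the joint instance. This argument is uniform over the choice of the cost function $f \in \{K, S, T, U, L, E, D, \mathit{SD}\}$, since it uses nothing about $f$ beyond the fact that the aggregate is a pointwise sum of the individual costs.

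The only real subtlety, and the one point I would state explicitly, concerns the selection semantics and the disjointness of the two groups. If the rule returns the full set of cost-minimizing schedules, then ``$\sigma$ is selected for $N_1$ and for $N_2$'' means $\sigma$ is a common minimizer, and the chain above shows it stays a minimizer for the union, hence is selected again. I would also flag the assumption $N_1 \cap N_2 = \emptyset$ used in the decomposition; this is the standard reading of reinforcement in which the two electorates are merged, and without it the additive split would double-count shared agents. There is no genuine obstacle here: the result is a direct consequence of the linearity of summation over a disjoint union, and the entire proof is a two-step inequality chain.
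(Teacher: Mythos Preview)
Your argument is correct and is exactly the standard additive-decomposition proof one would expect here; the paper itself states the proposition without proof, so there is nothing to compare against beyond noting that your two-line inequality chain is the intended (and essentially only) argument. Your caveat about disjointness and set-valued selection is appropriate and covers the one place where a careless reader might stumble.
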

% It is easy to check that each scheduling method based on cost functions with the ``sum'' aggregation satisfies reinforcement.

\subsection{Beyond Positional Scoring Rules and Cost Functions: the Condorcet Principle}

In the previous section we introduced several scheduling rules, all based on the notion of a distance between schedules. Thus, these scheduling rules are closely related to the Kemeny voting system.
% Among those scheduling rules, the ones based on the tardiness are new to this paper~\footnote{Tardiness is one of the most classic metric used in the scheduling theory, but to the best of our knowledge it has not been studied in multi-agent settings, as we defined it in this paper.}, and they are particularly interesting in our setting, because they take into account lengths of the jobs.
We now %In this section we
take a different approach. We start from desired properties of a collective schedule and design scheduling rules satisfying them.

Pareto efficiency is one of the most accepted axioms in social choice theory.
Below we use a formulation analogous to the one used in voting theory (based on swaps in preferred schedules).
\begin{definition}[Pareto efficiency]
A scheduling rule $\calR$ satisfies Pareto efficiency iff for each pair of jobs, $J_k$ and $J_\ell$, and for each preference profile $\sigma = (\sigma_1, \ldots, \sigma_n) \in \mathscr{P}$ such that for each $a \in N$ we have $J_k \; \sigma_a \; J_\ell$, it holds that $J_k \; \calR(\sigma) \; J_\ell$.  
\end{definition}
In other words, if all agents prefer $J_k$ to be scheduled before $J_\ell$, then in the collective schedule $J_k$ should be before $J_\ell$. Curiously, the total tardiness $\Sigma$-$T$ rule does not satisfy Pareto efficiency:

\begin{example}\label{ex:condorcet-paradox}
Consider an instance with 3 jobs $J_1, J_2, J_3$ with lengths 20, 5, and 1, respectively, and with two agents having preferred schedules $\sigma_a = (J_1, J_3, J_2)$ and $\sigma_b = (J_2, J_1, J_3)$.
Both agents prefer $J_1$ to be scheduled before $J_3$. If our scheduling rule satisfied
Pareto efficiency, then it would pick one of the following three schedules: $(J_1, J_3, J_2)$, $(J_1, J_2, J_3)$, or $(J_2, J_1, J_3)$. The total tardinesses of these schedules are equal to: 21, 25, and 10, respectively. Yet, the total tardiness of the schedule $(J_2, J_3, J_1)$ is equal to 7.%, thus lower than for all three aforementioned schedules.
\end{example}

This example can be generalized to inapproximability:
%In fact, there even does not exist a Pareto efficient scheduling rule that would approximate the total or the maximal tardiness within a constant factor.

\begin{proposition}\label{thm:tardiness_pareto_approx}
  For any $\alpha>1$, there is no scheduling rule that satisfies Pareto efficiency and is $\alpha$-approximate for $\max$-$T$ or $\Sigma$-$T$.
\end{proposition}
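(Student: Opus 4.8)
The plan is to turn Example~\ref{ex:condorcet-paradox} into a parametrized gadget whose ``approximation gap'' can be made arbitrarily large. The key observation from the example is that a single Pareto constraint (all agents prefer $J_1$ before $J_3$) forces the scheduler away from the unique near-optimal schedule $(J_2, J_3, J_1)$, and that this near-optimal schedule had cost $7$ whereas every Pareto-respecting schedule had cost at least $10$. To get inapproximability I would amplify this separation: rescale and augment the job lengths so that the cheapest Pareto-respecting schedule costs at least $\alpha$ times the cost of the overall optimum.

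The concrete construction I would build uses the same three-job skeleton but with lengths depending on a parameter. Take jobs $J_1, J_2, J_3$ with $p_1 = P$ (large), $p_2 = Q$ (medium), $p_3 = 1$, and the two agents $\sigma_a = (J_1, J_3, J_2)$, $\sigma_b = (J_2, J_1, J_3)$. Since both agents place $J_1$ before $J_3$, Pareto efficiency forbids any schedule in which $J_3$ precedes $J_1$; in particular it forbids $(J_2, J_3, J_1)$, which is exactly the schedule that keeps the long job $J_1$ last and thus keeps both agents' tardiness small. I would then compute, as closed-form expressions in $P$ and $Q$, the total tardiness (for $\Sigma$-$T$) and the max tardiness (for $\max$-$T$) of (i) the best schedule respecting the Pareto constraint and (ii) the globally optimal schedule $(J_2, J_3, J_1)$. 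The aim is to show the ratio of (i) to (ii) grows without bound as $P \to \infty$ with $Q$ chosen appropriately, so that for any fixed $\alpha > 1$ we can pick $P, Q$ making the forced loss exceed a factor $\alpha$. Because any Pareto-efficient rule must output one of the $J_1$-before-$J_3$ schedules, its cost is at least (i), while an $\alpha$-approximation would have to achieve cost at most $\alpha$ times (ii); choosing parameters so that (i) $> \alpha \cdot$ (ii) yields the contradiction for both aggregations simultaneously.

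The main obstacle, and the step I would spend the most care on, is the tardiness bookkeeping for the candidate schedules. Tardiness is measured against each agent's \emph{own} due dates $d_i = C_i(\sigma_a)$, so the same collective schedule $\tau$ contributes different per-job delays to $\sigma_a$ and to $\sigma_b$, and I must verify that \emph{every} schedule consistent with $J_1 \, \tau \, J_3$ (there are several orderings of $J_2$ relative to $J_1$ and $J_3$) incurs a large cost, not merely one representative schedule. The cleanest way to handle this is to enumerate the finitely many order types permitted under the Pareto constraint, express each cost as a function of $P$ and $Q$, and take the minimum over this finite set as the lower bound (i); then I only need that this minimum still dominates $\alpha$ times the optimum. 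A convenient parameter choice (for instance letting $Q$ be small relative to $P$ but large relative to $p_3 = 1$) should make the dominant terms transparent and the inequality (i) $> \alpha \cdot$ (ii) easy to read off, so that the same instance family defeats approximability for both $\Sigma$-$T$ and $\max$-$T$.
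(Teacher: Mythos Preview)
Your plan has a genuine gap: the three-job gadget cannot push the approximation ratio beyond a fixed constant, so it cannot defeat an arbitrary $\alpha>1$. Concretely, with $p_1=P$, $p_2=Q$, $p_3=1$ and the two preferred schedules $\sigma_a=(J_1,J_3,J_2)$, $\sigma_b=(J_2,J_1,J_3)$, the Pareto constraint $J_1\,\tau\,J_3$ still permits $\tau=(J_2,J_1,J_3)$, and this schedule has $\Sigma\text{-}T(\tau)=2Q$ (agent $b$ pays $0$, agent $a$ pays $Q$ for each of $J_1$ and $J_3$). The overall optimum $(J_2,J_3,J_1)$ has $\Sigma\text{-}T=Q+2$. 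Hence the best ratio you can ever force on a Pareto-efficient rule with this family is
\[
\frac{2Q}{Q+2}<2,
\]
independent of $P$ and $Q$; the same computation bounds the $\max\text{-}T$ ratio by $2$ as well. So the step ``show the ratio of (i) to (ii) grows without bound as $P\to\infty$'' will fail once you actually carry out the enumeration you describe.

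What is missing is amplification on the \emph{number} of short jobs, not on the length of the long one. The paper's construction replaces the single unit job $J_3$ by $x=\lceil 3\alpha\rceil$ unit jobs $J_3,\ldots,J_{x+2}$, with $p_1=x^2$, $p_2=x$, and both agents ranking $J_1$ ahead of every unit job. Pareto efficiency then forces $J_1$ before \emph{all} $x$ unit jobs, so the cheapest admissible schedule is $\sigma_2=(J_2,J_1,J_3,\ldots,J_{x+2})$ with total tardiness $x^2+x$, while the non-Pareto schedule $(J_2,J_3,\ldots,J_{x+2},J_1)$ has total tardiness only $3x$; the ratio $(x+1)/3>\alpha$ is now unbounded. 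Your skeleton is the right one, but you need the many-unit-jobs twist to make the Pareto constraint bite hard enough.
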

\begin{proof}%n=3\alpha
Let us assume, towards a contradiction, that there exists a scheduling rule $\calR$ that satisfies Pareto efficiency and is $\alpha$-approximate for minimizing $\Sigma$-$T$ (the proof for $\max$-$T$ is analogous). Let $x=\lceil 3\alpha \rceil$. Consider an instance with $x+2$ jobs: one job $J_1$ of length $x^2$, one job $J_2$ of length $x$, and $x$ jobs $J_3,\dots, J_{x+2}$  of length 1. Let us consider two agents with preferred schedules $\sigma_1 = (J_1, J_3, \dots, J_{x+2}, J_2)$ and $\sigma_2 = (J_2, J_1, J_3, \dots, J_{x+2})$. For each $i\in\{3,\dots,x+2\}$, both agents prefer job $J_1$ to be scheduled before job $J_i$. Let $\tau$ be the schedule returned by $\calR$.  Since $\calR$ satisfies Pareto efficiency, for each $i\in\{3,\dots,x+2\}$,  $J_1$ is scheduled before job $J_i$ in $\tau$.  Thus  $\tau$  is either $\sigma_2$  , or a schedule where $J_1$ is scheduled first, followed by $i$  jobs of length  1 ($i\in\{0,\dots,x\}$), followed by $J_2$, followed by the $x-i$ remaining jobs of length 1.  Let $S_{i}$ be such a schedule. In  $S_{i}$,  the tardiness of job $J_2$ is $x^2+i$ (this job is in first position in $\sigma_2$), and the tardiness of the jobs of length 1 is  $(x-i)x$ (the $x-i$ last jobs in $S_i$ are scheduled before $J_2$ in $\sigma_1$). Thus the total tardiness of $S_i$ is $(x^2+i)+(x-i)x\geq x^2+x$. The total tardiness of schedule $\sigma_2$ is $x^2+x$ (each of the $x$ jobs $J_1, J_3, \dots, J_{x+2}$ in $\sigma_2$ finishes $x$ time units later than in $\sigma_1$). Thus, the total tardiness of  $\tau$ is at least  $x^2+x$. Let us now consider schedule $\tau'$, which does not satisfy Pareto efficiency, and which is as follows: job $J_2$ is scheduled first, followed by the jobs of length 1, followed by job $J_1$. The total tardiness of this schedule is $3x$ (the only job which is delayed compared to $\sigma_1$ and $\sigma_2$ is job $J_1$).  This schedule is optimal for $\Sigma$-$T$. Thus the approximation ratio of  $\calR$ is at least $\frac{x^2+x}{3x}=\frac{x+1}{3}>\alpha$. Therefore, $\calR$ is not $\alpha$-approximate for $\Sigma$-$T$, a contradiction. 
%The proof for $\max$-$T$ uses the same argument.
%Similarly, using the same instance, we can show that there is no Pareto efficient $\alpha$-approximate algorithm for $\max$-$T$.
\end{proof}

\begin{proposition}\label{thm:tardiness_unit_sizes_Pareto}
If all jobs are unit-size, the scheduling rule $\sum$-$T$ is Pareto efficient.
\end{proposition}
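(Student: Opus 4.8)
The plan is to work with positions rather than completion times. For unit-size jobs the job with $\pos(J,\pi)$ jobs before it completes at time $\pos(J,\pi)+1$, so the additive constants cancel and the per-agent tardiness simplifies to
\begin{align*}
T(\tau,\sigma_a) = \sum_{J \in \calJ} \max\bigl(0,\ \pos(J,\tau) - \pos(J,\sigma_a)\bigr).
\end{align*}
Call a pair $(J_k,J_\ell)$ \emph{unanimous} if $J_k \, \sigma_a \, J_\ell$ for every $a \in N$; these pairs form a partial order $P$ (the intersection of the agents' total orders), and Pareto efficiency asks for an optimal schedule that is a linear extension of $P$. I would take a total-tardiness-minimizing schedule $\tau$ that, among all optimal schedules, has the fewest unanimous pairs scheduled in the wrong order, and show this number must be zero.

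The first step is the central exchange lemma: if $(J_k,J_\ell)$ is unanimous but $\tau$ schedules $J_\ell$ at some position $q$ before $J_k$ at position $r>q$, then exchanging the contents of positions $q$ and $r$ (leaving every other job fixed) does not increase $T(\tau,\sigma_a)$ for any agent $a$. Only the terms of $J_k$ and $J_\ell$ change, and writing $g(d) = \max(0,r-d) - \max(0,q-d)$ the net change for agent $a$ is exactly $g(\pos(J_\ell,\sigma_a)) - g(\pos(J_k,\sigma_a))$. A direct check shows $g$ is non-increasing (it equals $r-q$ for $d\le q$, decreases to $0$ on $q<d\le r$, and is $0$ for $d\ge r$); since unanimity gives $\pos(J_k,\sigma_a) < \pos(J_\ell,\sigma_a)$ for every $a$, this change is $\le 0$. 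Summing over agents, the exchange preserves optimality.

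The second step is that the same exchange strictly decreases the number of wrongly ordered unanimous pairs. The pair $(J_k,J_\ell)$ itself is repaired, contributing $-1$. For each job $J_m$ strictly between positions $q$ and $r$, its order relative to both $J_k$ and $J_\ell$ flips; the only flips that could create a new violation are when $J_m \mathbin{P} J_k$ or $J_\ell \mathbin{P} J_m$, but transitivity of $P$ (through $J_k \mathbin{P} J_\ell$) then forces $J_m \mathbin{P} J_\ell$ or $J_k \mathbin{P} J_m$ respectively, so the corresponding companion pair is simultaneously repaired. Thus each intermediate job contributes a net change of at most $0$, and the total count drops by at least $1$. Combined with the first step this contradicts the minimal choice of $\tau$ unless no unanimous pair is wrongly ordered, so the chosen optimal schedule is a linear extension of $P$ and the $\Sigma$-$T$ rule returning it is Pareto efficient.

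I expect the second step---the bookkeeping for the intermediate jobs---to be the main obstacle, since one must argue that repairing a single unanimous pair cannot, on balance, introduce others. The transitivity-based pairing of each newly created violation with a simultaneously repaired one is the crux, and it relies essentially on $P$ being a genuine partial order rather than an arbitrary relation; the first step, by contrast, is a routine piecewise-linear monotonicity computation.
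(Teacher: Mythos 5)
Your proof is correct and follows the same exchange-based approach as the paper, whose entire argument is a three-sentence assertion that unanimously mis-ordered pairs can be swapped without increasing total tardiness until none remain. Your proposal is in fact more rigorous than the paper's: the first step makes precise what the paper dismisses as ``apparent,'' and the second step---choosing an optimal schedule with the fewest violated unanimous pairs and showing the swap strictly decreases that count via transitivity of the unanimous order---supplies the termination argument the paper silently omits, which is a genuine subtlety since a single swap can create new violations involving the jobs between the two swapped positions.
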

\begin{proof}
Let us assume that there exist two jobs which are not in a Pareto order in the schedule $\sigma$ optimizing $\sum T$. We can swap these jobs in $\sigma$ and it is apparent that such a swap does not increase the total tardiness of the schedule. We can perform such swaps until we reach a schedule which does not violate Pareto efficiency.
\end{proof}

Pareto efficiency is one of the most fundamental properties in social choice. However, sometimes (especially in our setting) there exist reasons for violating it. For instance, even if all the agents agree that $J_x$ should be scheduled before $J_y$, the preferences of the agents with respect to other jobs might differ. Breaking Pareto efficiency can help to achieve a compromise with respect to these other jobs.

Nevertheless, Proposition~\ref{thm:tardiness_pareto_approx} motivated us to formulate alternative scheduling rules based on axiomatic properties. We choose the Condorcet principle, a classic social choice property that is stronger than Pareto efficiency. We adapt it to consider the durations of jobs.

% We adopt the Condorcet principle, a classic rule in 
% We aim at finding a rule satisfying an even stronger property. Once again, we refer to social choice theory: we adapt the Condorcet principle to take into account the running times of jobs.

\begin{definition}[Processing Time Aware (PTA) Condorcet principle]\label{def:pta_condorcet}
A schedule $\tau \in \mathscr{S}$ is PTA Condorcet consistent with a preference profile $\sigma = (\sigma_1, \ldots, \sigma_n) \in \mathscr{P}$ if for each two jobs, $J_k$ and $J_\ell$, it holds that $J_k \; \tau \; J_\ell$ whenever at least $\frac{p_k}{p_k + p_\ell} \cdot n$ agents put $J_k$ before $J_\ell$ in their preferred schedule. A scheduling rule $\calR$ satisfies the PTA Condorcet principle if for each preference profile it returns a PTA Condorcet consistent schedule, whenever such exists.
\end{definition} 

Let us explain our motivation for ratio $\frac{p_k}{p_k + p_\ell}$.
%in Definition~\ref{def:pta_condorcet}.
Consider a schedule $\tau$ and two jobs, $J_k$ and $J_\ell$, scheduled consecutively in $\tau$. By $N_k$ we denote the set of agents who rank $J_k$ before $J_\ell$ in their preferred schedules, and let us assume that $|N_k| > \frac{p_k}{p_k + p_\ell}n$; we set %$N_\ell = N \setminus N_k$.
$N_\ell = N - N_k$. % \setminus was slightly confusing here
Observe that if we swapped $J_k$ and $J_\ell$ in $\tau$, then each agent from $N_k$ would be disappointed.
% from the fact that $J_k$ is scheduled after $J_\ell$, and
Since such a swap
% violation of preferences of these agents
makes $J_k$ scheduled $p_\ell$ time units later than in $\tau$,
the level of dissatisfaction of each agent from $N_k$ could be quantified by $p_\ell$.
Thus, their total (utilitarian) dissatisfaction $\dis(N_k)$ could be quantified by $|N_k| \cdot p_{\ell}$. By an analogous argument, if we started with a schedule where $J_\ell$ is put right before $J_k$, and swapped these jobs, then the total dissatisfaction of agents from $N_\ell$ could be quantified by:
\begin{align*}
\dis(N_\ell) &= |N_\ell|p_k < \left(n - \frac{p_k}{p_k + p_\ell}n\right)p_k \\
          &= n\cdot \frac{p_k p_{\ell}}{p_k + p_\ell} < |N_k| \cdot p_{\ell} = \dis(N_k) \text{.}
\end{align*}
Thus, the total dissatisfaction of all agents from scheduling $J_k$ before $J_\ell$ is smaller than that from scheduling $J_\ell$ before $J_k$. Definition~\ref{def:pta_condorcet} requires that in such case $J_k$ should be indeed scheduled before $J_\ell$. 

Proposition~\ref{thm:tardiness_unit_sizes_PTA} below highlights the difference between scheduling based on the tardiness and on the PTA Condorcet principle. 

\begin{proposition}\label{thm:tardiness_unit_sizes_PTA}
Even if all jobs are unit-size, the  $\sum$-$T$ rule does not satisfy the PTA Condorcet principle.
\end{proposition}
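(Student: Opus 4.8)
The plan is to exploit the fact that for unit-size jobs the PTA Condorcet threshold $\frac{p_k}{p_k+p_\ell}\,n$ collapses to the ordinary majority threshold $\frac{n}{2}$, so that PTA Condorcet consistency coincides with the classical Condorcet condition. It therefore suffices to produce a profile of unit-size jobs whose strict majority relation is acyclic---so that a Condorcet-consistent (hence PTA Condorcet consistent) schedule exists---yet whose unique $\sum$-$T$ optimum reverses one of the majority pairs.

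The first step is to turn the $\sum$-$T$ objective into a form that is easy to optimize over all permutations. By \cref{prop:TandS}, for unit-size jobs $\sum_a T(\tau,\sigma_a) = \tfrac12 \sum_a S(\tau,\sigma_a)$, and the Spearman sum decomposes position-wise as $\sum_a S(\tau,\sigma_a) = \sum_{J\in\calJ} w\bigl(J, \pos(J,\tau)\bigr)$, where $w(J,r) = \sum_{a\in N} \bigl|r - \pos(J,\sigma_a)\bigr|$ is the cost of placing job $J$ at position $r$. Hence computing the $\sum$-$T$-optimal schedule amounts to a minimum-cost assignment of jobs to positions, which for a tiny instance can be resolved by inspecting all permutations.

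The core of the argument is a concrete three-job, five-agent profile. I would take jobs $A,B,C$ with preferred schedules $(B,A,C)$ and $(C,A,B)$ (two agents each) together with $(A,B,C)$ (one agent). A direct check of the three pairwise majorities gives $A$ before $B$, $A$ before $C$, and $B$ before $C$, each by a margin of $3$ to $2$; the majority relation is thus acyclic with the unique Condorcet ranking $(A,B,C)$, so a PTA Condorcet consistent schedule exists. Next I would tabulate the $3\times 3$ cost matrix $w(\cdot,\cdot)$---the salient feature being that $A$ occupies position $2$ for four of the five agents, so $w(A,\cdot)$ is minimized at position $2$ rather than $1$---and compare the six matching values. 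The unique minimizer is the schedule $(B,A,C)$, which places $B$ before $A$ and so violates the $A$-over-$B$ majority.

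Putting these together, $\sum$-$T$ returns $(B,A,C)$ even though the Condorcet (hence PTA Condorcet) consistent schedule $(A,B,C)$ exists, contradicting the PTA Condorcet principle and proving the proposition. The one delicate point---the main obstacle---is calibrating the profile so that a Condorcet ranking genuinely exists while the footrule/tardiness optimum still flips a majority pair: since the Spearman optimum is never far from the Kemeny optimum, the example must force the Condorcet winner $A$ into the \emph{middle} position for a strict majority of agents, and one must additionally verify that the optimal matching is unique, so that the rule has no Condorcet-consistent choice available.
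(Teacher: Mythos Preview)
Your proof is correct and follows essentially the same approach as the paper: exhibit a small explicit counterexample with three unit-size jobs and five agents in which the unique Condorcet ranking exists but is not $\Sigma$-$T$-optimal. The paper uses the profile $(J_1,J_2,J_3),\,(J_1,J_3,J_2)^2,\,(J_2,J_3,J_1)^2$, where the Condorcet schedule $(J_1,J_2,J_3)$ has total tardiness $6$ while $(J_1,J_3,J_2)$ has $5$; your profile $(A,B,C),\,(B,A,C)^2,\,(C,A,B)^2$ yields the same pair of values $6$ versus $5$ and works for the same reason. Your detour through \cref{prop:TandS} and the assignment/footrule decomposition is not needed for the argument---a direct enumeration of the six schedules suffices, as in the paper---but it does give a useful heuristic explanation for why placing the Condorcet winner in the middle position (where four of your five agents already have it) beats the Condorcet schedule.
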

\begin{proof}
Consider an instance with three jobs and three agents with the following preferred schedules:
\begin{align*}
&\sigma_1 = (J_1, J_2, J_3); \quad &\sigma_2 = (J_1, J_3, J_2); \quad &\sigma_3 = (J_1, J_3, J_2);  \\
&\sigma_4 = (J_2, J_3, J_1); \quad &\sigma_5 = (J_2, J_3, J_1). \quad &
\end{align*}
The only PTA Condorcet consistent schedule is $(J_1, J_2, J_3)$ with the total tardiness of 6. At the same time, the schedule $(J_1, J_3, J_2)$ has the total tardiness equal to 5.
\end{proof}

To construct a PTA Condorcet consistent schedule, we propose to extend Condorcet consistent~\cite{Col13, LevNal95} election rules to jobs with varying lengths. For example, we obtain:
% to the case of scheduling jobs with possibly different lengths. We give two examples:
\begin{description}
\item[PTA Copeland's method.] For each job $J_k$ we define the score of $J_k$ as the number of jobs $J_\ell$ such that at least $\frac{p_k}{p_k + p_\ell} \cdot n$ agents put $J_k$ before $J_\ell$ in their preferred schedule. The jobs are scheduled in the descending order of their scores.
\item[Iterative PTA Minimax.] For each pair of jobs, $J_k$ and $J_\ell$, we define the defeat score of $J_k$ against $J_\ell$ as $\max(0, \frac{p_k}{p_k + p_\ell}n - n_k)$, where $n_k$ is the number of agents who put $J_k$ before $J_\ell$ in their preferred schedule. We define the defeat score of  $J_k$ as the highest defeat score of $J_k$ against any other job. The job with the lowest defeat score is scheduled first. Next, we remove this job from the preferences of the agents, and repeat (until there are no jobs left).
\end{description}
Other Condorcet consistent election rules, such as the Dogdson's rule or the Tideman's ranked pairs method, can be adapted similarly. It is apparent that they satisfy the PTA Condorcet principle.

PTA Condorcet consistency comes at a cost: e.g., the two scheduling rules violate reinforcement, even if the jobs are unit-size.
Indeed, by the classic result of Young and Levenglick~\cite{lev-you:j:condorcet} one can infer that any rule that satisfies PTA-Condorcet principle, neutrality, and reinforcement must be a generalization of the Kemeny rule (i.e., must be equivalent to the Kemeny rule if the processing times of the jobs are equal). We conjecture that rules satisfying neutrality and reinforcement fail the PTA-Condorcet principle; it is an interesting open question whether such an impossibility theorem holds. 

\section{Computational Results}

In this section we study the computational complexity of finding collective schedules according to the previously defined rules. We start from the simple observation about the two PTA Condorcet consistent rules that we defined in the previous section.

\begin{proposition}
The PTA Copeland's method and the iterative PTA minimax rule are computable in polynomial time.
\end{proposition}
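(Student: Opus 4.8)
The plan is to show that each rule reduces to a polynomial number of pairwise comparisons followed by a simple scoring or greedy step, every part of which is clearly polynomial in the input size.

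First I would precompute the pairwise support counts. For every ordered pair of jobs $(J_k, J_\ell)$, let $n_k$ denote the number of agents who schedule $J_k$ before $J_\ell$ in their preferred schedule. Scanning each of the $n$ preferred schedules once and, for each of the $\binom{m}{2}$ pairs, incrementing the appropriate counter yields all of these counts in time $O(nm^2)$. Next I would observe that the threshold condition ``at least $\frac{p_k}{p_k+p_\ell}\cdot n$ agents put $J_k$ before $J_\ell$'' can be tested exactly with integer arithmetic by clearing the denominator: it is equivalent to $p_k\cdot n \leq n_k\cdot(p_k+p_\ell)$. Since the $p_i$ and $n$ are part of the input, these products have bit-length polynomial in the input size, so each test runs in polynomial time and the fractional threshold never has to be manipulated directly.

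For PTA Copeland's method it then suffices to count, for each job $J_k$, the number of jobs $J_\ell$ for which the above inequality holds (this is exactly the Copeland score), and to sort the jobs in descending order of their scores. Computing all scores takes $O(m^2)$ integer comparisons and sorting takes $O(m\log m)$, so the whole method runs in time $O(nm^2)$. For the iterative PTA minimax rule I would first compute, for every ordered pair $(J_k, J_\ell)$, the pairwise defeat score $\max\bigl(0, \frac{p_k}{p_k+p_\ell}n - n_k\bigr)$, which is again a polynomial-time arithmetic operation per pair. The one point worth making explicit is that deleting a scheduled job $J$ from the agents' preferences does not change the relative order of any two remaining jobs, so the pairwise counts $n_k$, and hence the pairwise defeat scores, never need to be recomputed across iterations. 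Consequently each of the $m$ iterations only has to, for every job still present, take the maximum of its pairwise defeat scores against the remaining jobs, select the job of minimum such defeat score, and delete it; each iteration costs $O(m^2)$. Combined with the $O(nm^2)$ preprocessing, this gives a total running time of $O(nm^2 + m^3)$.

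I do not expect a genuine obstacle here, as the statement is essentially a verification that the definitions are effective. The only two subtleties worth flagging are performing the comparison against the fractional threshold exactly via denominator clearing (so that no real arithmetic is needed), and the preserved-pairwise-order observation that keeps the minimax iteration cheap rather than forcing a recomputation of the support counts after each deletion.
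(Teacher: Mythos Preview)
Your proposal is correct and matches what the paper intends: the paper actually states this proposition without proof, treating it as a ``simple observation,'' and your argument is exactly the natural expansion of that observation into explicit running-time bounds. The two subtleties you flag (clearing denominators to avoid fractional thresholds, and the fact that pairwise orders are preserved under deletion so the minimax scores need not be recomputed) are the only points where one could even imagine a gap, and you handle both correctly.
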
 

We further observe that %the questions pertaining to the
computational complexity of the rules which ignore the lengths of the jobs (rules based on swap costs) can be directly inferred from the known results from computational social choice. For instance, the $\Sigma$-$K$ rule is simply the well-known and extensively studied Kemeny rule. Thus, in the further part of this section we focus on the rules based on delay costs.

\subsection{Sum of Delay Costs}\label{sec:tard-sum}

First, observe that the problem of finding a collective schedule is computationally easy for the total lateness ($\Sigma$-$L$). In fact, $\Sigma$-$L$ ignores the preferred schedules of the agents and arranges the jobs from the shortest to the longest one. 

\begin{proposition}\label{prop:lateness_easy}
The rule $\Sigma$-$L$ schedules the jobs in the ascending order of their lengths.
\end{proposition}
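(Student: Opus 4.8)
The plan is to show that for the total lateness aggregation $\Sigma$-$L$, the preferred schedules of the agents are irrelevant, and the optimal schedule is obtained by sorting jobs from shortest to longest (the classic Shortest Processing Time rule). The crucial simplification comes from the definition of lateness, $L(c_i, d_i) = c_i - d_i$, which---unlike tardiness---does \emph{not} truncate at zero. First I would write out the objective $\Sigma$-$L$ explicitly. For a candidate schedule $\tau$ and an agent with preferred schedule $\sigma_a$, the agent's lateness cost is
\begin{align*}
L(\tau, \sigma_a) = \sum_{J_i \in \calJ} \big(C_i(\tau) - C_i(\sigma_a)\big) = \sum_{J_i \in \calJ} C_i(\tau) - \sum_{J_i \in \calJ} C_i(\sigma_a) \text{.}
\end{align*}
The key observation is that the second term $\sum_{J_i} C_i(\sigma_a)$ depends only on the agent's preferred schedule and not on $\tau$, so it is an additive constant with respect to the optimization.

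Summing over all agents, the total lateness becomes $\sum_{a \in N} L(\tau, \sigma_a) = n \sum_{J_i \in \calJ} C_i(\tau) - \sum_{a \in N}\sum_{J_i \in \calJ} C_i(\sigma_a)$, where only the first term depends on $\tau$. Hence minimizing $\Sigma$-$L$ over $\tau \in \mathscr{S}$ is equivalent to minimizing $\sum_{J_i \in \calJ} C_i(\tau)$, the total completion time of the jobs in $\tau$. This makes the dependence on the preference profile vanish entirely, which is exactly the claim that $\Sigma$-$L$ ignores the agents' preferred schedules.

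The remaining step is to recall the classical single-machine result that total completion time $\sum_i C_i(\tau)$ is minimized by scheduling jobs in ascending order of processing time (the SPT rule). I would justify this by the standard exchange argument: since the schedule has no gaps, if two adjacent jobs appear in the order (longer, shorter), swapping them strictly decreases the sum of their two completion times while leaving all other completion times unchanged; iterating such swaps yields the ascending-length order and cannot increase the objective. I expect this exchange argument to be the only genuine piece of reasoning, and it is entirely routine; the real content is the algebraic cancellation in the first two paragraphs showing that the preferred schedules contribute only a constant. One minor point to state carefully is that ties in processing times yield multiple optimal schedules, all of equal value, so the rule is well-defined up to tie-breaking.
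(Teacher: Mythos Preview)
Your proposal is correct and follows essentially the same approach as the paper: expand the total lateness, observe that the terms involving the preferred schedules form an additive constant, and reduce to minimizing $\sum_{J_i} C_i(\tau)$, which is achieved by SPT. The paper simply asserts the last step as well known, whereas you spell out the adjacent-exchange argument, but otherwise the arguments coincide.
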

\begin{proof}
Consider the total cost of the agents:  
\begin{align*}
\sum_{a \in N} L(\tau, \sigma_a) &= \sum_{a \in N} \sum_{J_i \in \calJ}(C_i(\tau) - C_i(\sigma_a)) %\\
% &=
    = |N| \sum_{J_i \in \calJ} C_i(\tau) - \sum_{a \in N} \sum_{J_i \in \calJ}C_i(\sigma_a) \text{.}
\end{align*}
Thus, the total cost of the agents is minimized when $\sum_{J_i \in \calJ} C_i(\tau)$ is minimal. This value is minimal when the jobs are scheduled from the shortest to the longest one. 
\end{proof}

On the other hand, minimizing the total tardiness $\Sigma$-$T$ is $\np$-hard even with the unary representation of the durations of jobs. \citet{du1990minimizing} show that minimizing total tardiness with \emph{arbitrary} due dates on a single processor ($1||\sum T_i$) is weakly $\np$-hard. We cannot use this result directly as the due dates in our problem $\Sigma$-$T$ are structured and depend, among others, on jobs' durations.

\begin{theorem}\label{thm:tardiness_np_hard}
The problem of finding a collective schedule minimizing the total tardiness ($\Sigma$-$T)$ is strongly $\np$-hard.
\end{theorem}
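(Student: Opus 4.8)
The plan is to prove strong $\np$-hardness of $\Sigma$-$T$ by reduction from a strongly $\np$-hard problem. Because we need the hardness to survive a unary encoding of the processing times, the natural source problem is \textsc{3-Partition}: given $3t$ positive integers $a_1, \ldots, a_{3t}$ and a bound $B$ with $\sum_i a_i = tB$ and $B/4 < a_i < B/2$ for each $i$, decide whether the $a_i$ can be partitioned into $t$ triples each summing exactly to $B$. \textsc{3-Partition} is strongly $\np$-hard, so the $a_i$ may be assumed polynomially bounded, which is exactly what a unary-hardness claim for $\Sigma$-$T$ requires.

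The heart of the reduction is to design a preference profile and a set of jobs so that an optimal collective schedule must arrange the ``item jobs'' (one per $a_i$, of length proportional to $a_i$) into $t$ groups of total length $B$ each, separated by $t-1$ rigid ``separator jobs'' whose placement in the agents' preferred schedules forces them to appear at prescribed times. First I would introduce $3t$ item jobs $J_1, \ldots, J_{3t}$ with $p_i = a_i$ (or a scaled version), together with a small number of long separator jobs whose job lengths create strong tardiness incentives. The agents' preferred schedules should be chosen symmetrically enough that the item jobs contribute a fixed, order-independent baseline tardiness, so that the \emph{only} way to lower the total tardiness below a target threshold is to place each separator so that exactly $B$ units of item-work sit between consecutive separators. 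The target tardiness bound $K$ in the decision version of $\Sigma$-$T$ would then be set so that it is achievable if and only if a valid $3$-partition exists. I would rely on the swap/interchange argument already used in Proposition~\ref{thm:tardiness_unit_sizes_Pareto}: in an optimal schedule one can reason locally about adjacent jobs, which helps pin down the block structure.

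The key steps, in order, are: (i) fix the gadget, specifying job lengths, the number of agents, and each agent's preferred schedule; (ii) compute the contribution of the item jobs to $\Sigma$-$T$ and show it is constant across all schedules that respect the separator structure, so that minimizing total tardiness reduces to minimizing the separators' contribution; (iii) show via an exchange argument that any schedule beating the threshold must place the separators at the ``round'' positions that carve the timeline into blocks of length $B$, and that the leftover item jobs in each block must sum to exactly $B$; (iv) conclude that a schedule meeting the target tardiness $K$ exists iff the $3$-partition instance is a yes-instance. Finally I would note that all lengths are polynomial in the input size, so the reduction is valid under unary encoding and the hardness is \emph{strong}.

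The hard part will be step (ii)--(iii): calibrating the separator lengths and the agents' preferred schedules so that the tardiness penalty is \emph{tight}, i.e., so that deviating from perfect $B$-blocks incurs strictly positive extra tardiness while any valid $3$-partition achieves exactly $K$. The tardiness function is piecewise linear and one-sided ($\max(0, c_i - d_i)$), so the due dates $d_i = C_i(\sigma_a)$ induced by the preferred schedules must be engineered so that the item jobs are neither all late nor all early in a way that would let their contribution fluctuate with the schedule and drown out the partition signal. Getting this penalty to be both large enough to force the block structure and exactly zero-slack for genuine partitions is the delicate balancing act; the remaining bookkeeping (summing the constant item contribution, verifying the polynomial bound on lengths) is routine once the gadget is fixed.
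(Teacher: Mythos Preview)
Your high-level strategy matches the paper's: reduce from \textsc{3-Partition}, introduce one item job per integer, add separator jobs that must sit at the block boundaries, argue that the item jobs' tardiness is order-independent, and then show by an exchange argument that the only way to hit the target is to place the separators at the ``round'' positions $B, 2B, \ldots$. That is exactly the skeleton of the paper's proof.

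The genuine gap is in your step~(ii): you assert that the preferred schedules can be made ``symmetric enough'' for the item-job tardiness to be constant while the separators acquire ``strong tardiness incentives,'' but you give no mechanism for achieving both at once, and the hint you do give (a small number of long separators) points in the wrong direction. The obstacle is structural: every agent's preferred schedule must contain \emph{all} jobs, so if separator $X_k$ is to have due date $\approx kB$ for every agent, each agent must place exactly $kB$ units of work before $X_k$---yet the item lengths $a_i$ are arbitrary, and you cannot select subsets summing to prescribed values without already solving the partition. The paper's fix is to introduce a large family of auxiliary \emph{padding} jobs $P_{s,i,j}$ (of lengths $s_T$ and $s_T{-}s$), which each agent uses privately to fill the inter-separator slots in her own preferred schedule; the actual item job $J_s$ is placed first by exactly $s$ agents, which is precisely what makes the item contribution order-invariant (swapping adjacent items $J_s, J_{s'}$ changes total tardiness by $s\cdot s' - s'\cdot s = 0$). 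A separate exchange argument then pushes all padding jobs past the items and separators in any optimum, so the prefix of the optimal schedule is forced to be the block structure you want. Without this padding device---and the accompanying lemma that padding jobs are last in every optimal schedule---steps~(ii)--(iii) of your plan do not go through; this is the missing idea your sketch needs to supply.
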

\begin{proof}
  We reduce from the strongly $\np$-hard \textsc{3-Partition} problem. Let $I$ be an instance of \textsc{3-Partition}.
  In $I$ we are given a multiset of integers $S = \{s_1, \ldots, s_{3\mu}\}$. We denote $s_{\Sigma} = \sum_{s \in S}s$.
  We ask if $S$ can be partitioned into $\mu$ triples that all have the same sum, $s_T = s_{\Sigma}/\mu$.
  Without loss of generality, we can assume that $\mu \geq 2$ and that for each $s \in S$, $\mu < s < \frac{s_{T}}{2}$ (otherwise, we can add a large constant $s_{\Sigma}$ to each integer from $S$, which does not change the optimal solution of the instance, but which ensures that $\mu < s < \frac{s_{T}}{2}$ in the new instance).
  We also assume that the integers from $S$ are represented in unary encoding.

  From $I$ we construct an instance $I'$ of the problem of finding a collective schedule that minimizes the total tardiness in the following way. For each number $s \in S$ we introduce $1+s\mu$ jobs: $J_s$ and $\big\{P_{s,i, j}\colon i \in [s], j \in [\mu] \big\}$.
We set the processing time of $J_s$ to $s$. Further, for each $i \in [s]$ we set the processing time of $P_{s, i, 1}$ to $(s_{T} - s)$, and of the remaining $j \geq 2$ jobs $P_{s, i, j}$ to $s_{T}$.
We denote the set of all such jobs as $\calJS = \{J_s\colon s \in S\}$ and $\calP = \big\{ P_{s,i,j} \colon s, i \in [s], j \in [\mu] \big\}$.
Additionally, we introduce $\mu$ jobs, $\calX = \{ X_1, \ldots, X_\mu \}$, each having a unit processing time.

There are $s_{\Sigma}$ agents. For each integer $s \in S$ we introduce $s$ agents. The $i$-th agent corresponding to number $s$, denoted by $a_{s,i}$, has the following preferred schedule (in the notation below a set, e.g., $\{ J_{s'} \}$ denotes that its elements are scheduled in a fixed arbitrary order):
\begin{align*}
  \Big(J_s, P_{s,i,1}, X_1, P_{s,i,2}, X_2, \ldots, P_{s,i,\mu}, X_\mu, \{J_{s'}\colon s' \neq s \}, %\\
  \big\{P_{s',j, \ell}\colon (s' \neq s \; \text{or} \; j \neq i) \; \text{and} \; \ell \in [\mu] \big\} \Big) \text{.}
\end{align*}

\begin{figure}[tb!]
  \centering
\includegraphics[width=\columnwidth]{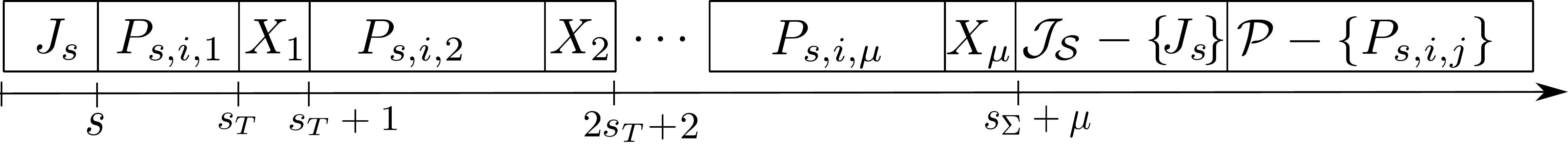}
 \includegraphics[width=\columnwidth]{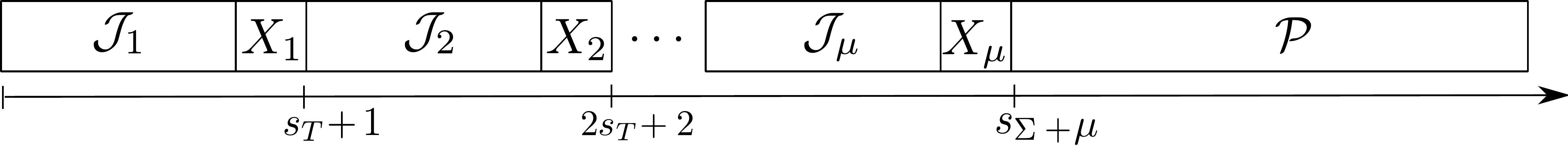}
  \caption{The preferred schedule $\sigma_{(s,i)}$ of agent $a_{s,i}$ (top) and the optimal schedule (bottom).}\label{fig:np-hard}
\end{figure}

We claim that the answer to the initial instance $I$ is ``yes'' if and only if the schedule $\sigma^*$ optimizing the total tardiness is the following one: $\Big(\calJ_1, X_1, \calJ_2, X_2, \calJ_\mu, X_\mu, \calP \Big)$, where for each $i \in [\mu]$, $\calJ_i$ is a set consisting of jobs from $\calJS$ with lengths summing up to $s_T$ (see Figure~\ref{fig:np-hard}). %It is easy to see that
If such a schedule exists, then the answer to $I$ is ``yes''. Below we will prove the other implication.

Observe that any job from $\calJS$ should be scheduled before each job from $\calP$.
Indeed, for each pair $P_{s,i,j}$ % \in \calP$
and $J_{s'}$ only a single agent $a=a_{s,i}$ ranks $P_{s,i,j}$ before $J_{s'}$; at the same time there exists another agent $a'=a_{s', k}$ who ranks $J_{s'}$ first.
As $J_{s'}$ is shorter than $P_{s,i,j}$,
$a'$ gains more from $J_{s'}$ scheduled before $P_{s,i,j}$,
than $a$ gains from  $P_{s,i,j}$ scheduled before $J_{s'}$.
Thus, if $P_{s,i,j}$ were scheduled before $J_{s'}$, we could swap these two jobs and improve the schedule (such a swap could only improve the completion times of other jobs since $J_{s'}$ is shorter than $P_{s,i,j}$).  

By a similar argument, any job from $\calX$ should be scheduled before each job from $\calP$. Indeed, if it was not the case, then there would exist jobs $P = P_{s,i,j}$ and $X = X_{i'}$ such that $P$ is scheduled right before $X$ (this follows from the reasoning given in the previous paragraph---a job from $\calJS$ cannot be scheduled after a job from $\calP$). Also, since all the jobs from $\calJS$ are scheduled before $P$, the completion time of $X$ would be at least
$s_{\Sigma} + \frac{s_T}{2} + 1 \geq s_{\Sigma} + \mu + 2$.
For each agent, the completion time of $X$ in their preferred schedule is at most equal to %$(s_T - s)\mu + s + \mu > s_T \mu = s_{\Sigma}$.
$\mu(s_T + 1)=s_{\Sigma}+\mu$.
Thus, if we swap $X$ and $P$ the improvement of the tardiness due to scheduling $X$ earlier would be at least equal to % $s_{\Sigma} \cdot \frac{s_T}{2}$.
$2 s_{\Sigma}$.
Such a swap increases the completion time of $P$ only by one, so the increase of the tardiness due to scheduling $P$ later would be at most equal to $s_{\Sigma}$. Consequently, a swap would decrease the total tardiness, and so $X$ could have not been scheduled after $P$ in $\sigma^*$.   
 
We further investigate the structure of an optimal schedule $\sigma^*$. We know that $\calJS \; \sigma^* \; \calP$ and that $\calX \; \sigma^* \; \calP$, but we do not yet know the optimal order of jobs from $\calJS \cup \calX$. Before proceeding further, we introduce one useful class of schedules, $\calT$, that execute jobs in the order $( \calJS, \calX, \calP )$. Observe that $\sigma^*$ can be constructed starting from some schedule $\tau \in \calT$ and performing a sequence of swaps, each swap involving a job $J \in \calJS$ and a job $X \in \calX$.
The tardiness of $\sigma^*$ is equal to the tardiness of the initial $\tau$ adjusted by the changes due to the swaps.
Below, we further analyze $\calT$.
First, any ordering of $\calJS$ in $\tau$ results in the same tardiness.
Indeed, consider two jobs $J_s$ and $J_{s'}$ such that $J_{s'}$ is scheduled right after $J_{s}$. If we swap $J_{s}$ and $J_{s'}$, then the total tardiness of $s$ agents increases by $s'$ and the total tardiness of $s'$ agents decreases by $s$. In effect, the total tardiness of all agents remains unchanged.
Second, there exists an optimal schedule where the relative order of the jobs from $\calX$ is $X_1 \, \sigma^* \, X_2 \, \sigma^* \ldots \sigma^* \, X_\mu$. Thus, w.l.o.g., we constrain $\calT$ to schedules in which $\calX$ are put in exactly this order. 

Since we have shown that all $\calT$ always have the same tardiness, no matter how we arrange the jobs from $\calJS$, the tardiness of $\sigma^*$ only depends on the change of the tardiness due to the swaps. Consider the job $X_1$, and consider what happens if we swap $X_1$ with a number of jobs from $\calJS$ so that eventually $X_1$ is scheduled at time $s_T$ (its start time in all preferred schedules). In such a case, moving $X_1$ forward decreases the tardiness of each of $s_\Sigma$ agents by $(s_\Sigma - s_T)$. Moving $X_1$ forward to $s_T$ requires however delaying some jobs from $\calJS$.  Assume that the jobs from $\calJS$ with the processing times $s_{i_1}, \ldots s_{i_{\ell}}$ are delayed. Each such job needs to be scheduled one time unit later. Thus, the total tardiness of $s_{i_1}$ agents increases by 1 (the agents who had this job as the first in their preferred schedule), of other $s_{i_2}$ agents increases by 1, and so on. Since $s_{i_1} + \ldots + s_{i_{\ell}} = s_{\Sigma}-s_T$, the total tardiness of all agents increases by $s_{\Sigma}-s_T$.
Thus, in total, executing $X_1$ at $s_T$ decreases the total tardiness by $s_{\Sigma} ( s_\Sigma - s_T) - ( s_{\Sigma} - s_T )$, a positive number. Also, observe that this value does not depend on how the jobs from $\calJS$ were initially arranged, provided that $X_1$ can be put so that it starts at $s_T$.

Starting $X_1$ earlier than $s_T$ does not improve the tardiness of $X_1$, yet it increases tardiness of some other jobs, so it is suboptimal. By repeating the same reasoning for $X_2, \ldots, X_\mu$ we infer that we obtain the optimal decrease of the tardiness when $X_1$ is scheduled at time $s_T$, $X_2$ at time
$2 s_T + 1$, etc., and if there are no gaps between the jobs.
However, such schedule is possible to obtain if and only if the answer to the initial instance of 3-Partition is ``yes''.
% Consequently, the structure of $\sigma^*$ is $\Big(\calJ_1, X_1, \calJ_2, X_2, \calJ_\mu, X_\mu, \calP \Big)$ with $X_i$ scheduled at $is_T + (i-1)$ if and only if the answer to $I$ is ``yes''. This proves the correctness of the reduction and completes our reasoning.
% krz to others: the last 2 sentences repeat our claim which is already in the middle of the proof
\end{proof}

A similar strategy (yet, with a more complex  construction) can be used to prove the $\np$-hardness of $\Sigma$-$U$. 

\begin{theorem}\label{thm:unit_penalties_hard}
The problem of finding a collective schedule minimizing the total number of late jobs ($\Sigma$-$U$) is strongly $\np$-hard.
\end{theorem}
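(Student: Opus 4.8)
The plan is to reduce from the strongly $\np$-hard \textsc{3-Partition} problem, reusing the overall architecture of the proof of Theorem~\ref{thm:tardiness_np_hard} but replacing the magnitude-sensitive tardiness accounting with a threshold-based counting of late jobs. As before, I would introduce a family of \emph{item jobs} $\{J_s : s \in S\}$ whose processing times equal the integers $s$, a family of unit-length \emph{separator jobs} $X_1, \ldots, X_\mu$, and a family of long \emph{padding jobs} $\calP$ that almost every agent prefers last. The padding jobs serve only to force structure: by a length-based exchange argument analogous to the one in Theorem~\ref{thm:tardiness_np_hard}, in any optimal schedule every job of $\calJS \cup \calX$ precedes every job of $\calP$, and the separators appear in the fixed order $X_1, \ldots, X_\mu$. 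The decision to be encoded is then the interleaving of the item jobs with the separators, exactly as before, with the intended boundary completion times $i(s_T+1)$ for $X_i$ arising whenever a sum-$s_T$ batch of item jobs sits between consecutive separators.

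The essential new difficulty is that the unit penalty $U$ is insensitive to the size of a delay: a job contributes the same penalty whether it is late by one time unit or by a thousand. Consequently I cannot let cost accumulate linearly, as in the tardiness argument; instead I must design the preferred schedules so that whether a given (agent, job) penalty is $0$ or $1$ flips exactly when a separator crosses its intended boundary. Concretely, I would replicate agents so that each separator $X_i$ carries a distinguished block of agents whose due date for $X_i$ is precisely its boundary completion time $i(s_T+1)$; these agents contribute no penalty for $X_i$ when it is scheduled on its boundary and a controlled positive number of penalties as soon as it slips later. Symmetrically, the item jobs must carry penalties that punish scheduling a separator \emph{too early}, so that the only way to avoid both kinds of penalty simultaneously is to place exactly a sum-$s_T$ batch of item jobs between consecutive separators.

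Given this, I would fix a target value $K^*$ for $\Sigma$-$U$ and argue the two directions. For a \textsc{yes}-instance, the partition into $\mu$ triples of sum $s_T$ yields a schedule in which every separator sits exactly on its boundary, and a direct count shows the total number of late jobs equals $K^*$. For the converse, I would show that any schedule with $\Sigma$-$U \le K^*$ must place each separator on its boundary: pushing a separator past its boundary incurs the block penalty, while pulling it before its boundary makes some item jobs late for their owners, and the replication multiplicities are calibrated so that either deviation strictly exceeds any possible saving. Hence any schedule meeting the target induces a valid partition, completing the reduction.

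The main obstacle I anticipate is the calibration in this converse direction. Because penalties are unit-valued, the counting argument must be genuinely two-sided and robust to \emph{arbitrary} interleavings, not merely to the single adjacent swaps that suffice for tardiness. Choosing the multiplicities of the replicated agents so that a misplacement by even one time unit flips the count decisively in both directions, while no clever rearrangement of the padding or item jobs can compensate, is the delicate combinatorial core of the construction and is precisely the source of the additional complexity over the proof of Theorem~\ref{thm:tardiness_np_hard}.
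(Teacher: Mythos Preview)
Your plan follows the same high-level architecture as the paper's proof---reduce from \textsc{3-Partition}, create item jobs of length~$s$, unit separators~$X_i$, and padding, then argue that the separators must land on their boundaries---and you have correctly diagnosed the central difficulty: because $U$ is threshold-based, the magnitude-sensitive swap arguments from Theorem~\ref{thm:tardiness_np_hard} do not carry over, and the calibration must be redone from scratch. That diagnosis is accurate, but two concrete mechanisms that the paper uses to resolve it are missing from your outline, and without them the ``main obstacle'' you identify is not yet overcome.

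First, the length-based exchange argument you invoke to force all item and separator jobs ahead of all padding jobs does \emph{not} transfer from tardiness to unit penalties. In the tardiness proof, swapping a long padding job with a shorter item job strictly helps because the item job's completion time improves by a large amount; under $U$, ``improves by a large amount'' is irrelevant---only the $0/1$ threshold matters---so that swap need not reduce the count. The paper patches this with an additional very long job $L$ (longer than the sum of all other jobs) whose presence forces a global case split: if $L$ is on time, then the last $\Theta(\mu)$ jobs are late for almost every agent, giving an unacceptably large count; hence $L$ must be last, which in turn pins down the structure you want. You will need some such device, and your current outline does not have one.

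Second, and more importantly, the calibration you are looking for is achieved in the paper not by penalising the \emph{separators} in a two-sided way, but by giving each \emph{item job} $F_s$ a staircase of due dates: for every $i\in[\mu]$ there are exactly $s$ agents whose due date for $F_s$ is the end of slot~$i$. Consequently, if $F_s$ finishes in slot $i$, it is late for exactly $(i-1)s$ agents, so the total late count from item jobs equals $\sum_i (i-1)U_i$, where $U_i$ is the total length of item jobs completed in slot~$i$. Minimising this subject to the separators being on time forces each slot to be packed to exactly $s_T$, which is the partition. This staircase construction (together with the auxiliary jobs $R_{s,i,j}$ of length $s_T-s$ that realise the shifted due dates inside each agent's preferred schedule) is the missing ``calibration'' gadget; once you see it, the two-sided robustness you were worried about follows cleanly, because one direction is controlled by the separators' common due dates and the other by the staircase on the item jobs.
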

\begin{proof}
  We give a reduction from the strongly $\np$-hard \textsc{3-Partition} problem. Let $I$ be an instance of \textsc{3-Partition}.
  In $I$ we are given a multiset of $3\mu$ integers $S = \{s_1, \ldots, s_{3\mu}\}$. Similarly, as in the proof of Theorem~\ref{thm:tardiness_np_hard}, we set $s_{\Sigma} = \sum_{s \in S}s$.
  In $I$ we ask if $S$ can be partitioned into $\mu$ triples that all have the same sum, $s_T = s_{\Sigma}/\mu$.
  We assume that for each $s \in S$, $s < \frac{s_T}{2}$, that $\mu>4$, and that the integers from $S$ are represented in unary encoding.

  From $I$ we construct an instance $I'$ of the problem of finding a collective schedule that minimizes the total number of late jobs in the following way. For each number $s \in S$ we introduce the following jobs:
\begin{itemize}
\item a job $F_s$ of length $s$; 
\item $s \mu$ jobs of length $s_T-s$; we denote this set as:
\begin{align*}
\mathcal{R}_s=\big\{ R_{s,i,j} \colon s, i \in [s] j \in [\mu]\big\};
\end{align*}
%\{R_{s,1},R_{s,2}, \dots, R_{s, s \mu}\}$ be the set of these jobs;
\item $\mu(\mu-1)s$ jobs of length $s_T$; we denote this set as:
\begin{align*}
\calP_s = \big\{ P_{s,i,j,k} \colon s, i \in [s], j \in [\mu], k \in [\mu -1] \big\}.
\end{align*}
\end{itemize}
Let $\calJ$ be the set of all the jobs. Further, we set:
\begin{align*}
\mathcal{F} = \{\mathcal{F}_s\colon s\in S\}; \quad\quad \mathcal{R} = \bigcup_{s \in S} \mathcal{R}_s; \quad\quad \mathcal{P} = \bigcup_{s \in S} \mathcal{P}_s;
\end{align*} 

%% = \big\{ R_{s,i} \colon s \in S, i \in [s\mu] \big\}$, and  $\calP$ the set of all jobs $\{P^s_i\}$ (with $s\in S$, $i\in [\mu(\mu-1)s]$).

Additionally, we introduce $\mu$ jobs, $\calX = \{ X_1, \ldots, X_\mu \}$, each having a unit length, and a job $L$ of length $3 s \mu^3 s_T$ (thus, the length of $L$ is larger than the length of all the jobs of  $\calJ \setminus \{L\}$). 
%= \big\{ P_{s,i,j} \colon s, i \in [s], j \in [\mu] \big\}$ %the set of all jobs $\{d^s_i\}$, (with $s\in S$, $i\in\{1,\dots, s\mu\}$), and we denote by $\mathcal{P}$ the set of all jobs $\{P^s_i\}$, (with $s\in S$, $i\in\{1,\dots, \mu(\mu-1)s\}$).

There are $\mu s_\Sigma$ agents in total. For each number $s \in S$ we introduce $s \mu$ agents. Let $\calA_s$ be the set of these agents. We partition $\calA_s$ into $\mu$ sets of $s$ agents: $\calA_{s,1},\dots,\calA_{s,\mu}$. Figure~\ref{nb_late_tasks} 
represents the preferred schedule of the $j$-th agent from $\calA_{s,i}$ ($i\in [\mu], ~j\in[s]$).
For all the agents, job $X_i$ ($i\in [\mu]$) starts at time $i s_T + i-1$, and job $L$ starts at time $D=s_\Sigma+\mu=\mu(s_T+1)$.
Further, for all the agents of $\calA_{s,i}$, job $F_s$ starts at time $i s_T + (i -1) - s$ (i.e., for these agents, $F_s$ is scheduled just before job $X_i$). Further, in this schedule job $R_{s,i,j}$ is put just before job $F_s$: at time $(i-1) (s_T+1)$, and job $P_{s,i,j,k}$ ($k\in [\mu-1]$) is scheduled at time $(k-1) (s_T+1)$ if $k<i$, and at time $k (s_T+1)$ if $k\geq i$. 
All the other jobs are scheduled after job $L$, i.e., at soonest at time $D+p_L$. 
Let us arbitrarily label the agents from $0$ to $\mu s_\Sigma-1$. The jobs of Agent $i$ which are not already scheduled before $D+p_L$ are scheduled in an arbitrarily order after $D+p_L$, except that the $2\mu-2$ latest jobs of the schedule are the jobs of $\calP$ which are scheduled before $D$ in the preferred schedule of Agent $(i+1 \mod \mu s_\Sigma)$, followed by the jobs of $\calP$ which are scheduled before $D$ in the preferred schedule of Agent $(i+2 \mod \mu s_\Sigma)$. This will ensure that each job of $\calP$ appears only twice in the $(2\mu-2)$ last jobs of the agents (since, for each job of $\calP$, only one agent schedules it before $D$).

\begin{figure*}[t!]
\centering
\includegraphics[width=\columnwidth]{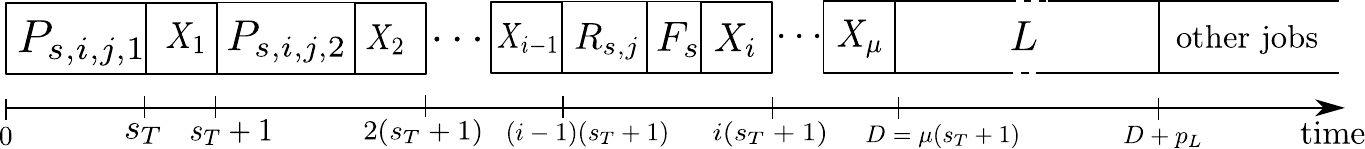}
\caption{Preferred schedule of the $j$-th agent of $\calA_{s,i}$.}
\label{nb_late_tasks}
\end{figure*}

We will now show that the answer to the 3-partition problem on instance $I$ is ``yes" if and only if the optimal schedule for $\Sigma$-$U$ on $I'$ starts as follows: $\Big(\calF_1, X_1, \calF_2, X_2, \calF_\mu, X_\mu \Big)$, where each set $\calF_i$ consists of jobs from $\calF$ with lengths summing up to $s_T$.

If the schedule for $\Sigma$-$U$ on $I'$ starts as follows: $\Big(\calF_1, X_1, \calF_2, X_2, \calF_\mu, X_\mu \Big)$, where each set $\calF_i$ ($i \in [\mu]$) consists of jobs from $\calF$ with lengths summing up to $s_T$, then the solution of 3-partition is ``yes'' since each job of $\calF$ has the length of a number of $S$. Let us now assume that the solution of 3-partition on Instance $I$ is ``yes''. We will show that the optimal solution of $\Sigma$-$U$ on Instance $I'$ indeed starts with: $\Big(\calF_1, X_1, \calF_2, X_2, \calF_\mu, X_\mu \Big)$.

%Let us consider an optimal schedule for $I'$. In this schedule, jobs of $\calF$ and $\calX$ are scheduled before the other jobs. Indeed, if this wasn't the case,  exchanging a job of  $\calJ - (\calF \cup \calX)$ scheduled before $D$ with a job of  $\calF\cup \calX$ scheduled after $D$ would decrease the number of late tasks. Indeed, each job $\calF\cup \calX$ is strictly smaller than a job of $\calJ-(\calF\cup \calX)$,  and it appears at least $\mu$ times before $D$ in the preferences of the agents, including once finishing at time $D$, whereas a job of $\calJ-(\calF\cup \calX)$ appears only once before $D$ in the preferences of the agents.

Let us consider an optimal schedule $\sigma^*$ for $I'$. First, we will show that in $\sigma^*$, each job $X_i$ ($i\in [\mu]$) is scheduled at latest at time $i s_T + (i-1)$. 

%We assume that in $OPT$ jobs $X_i$ are scheduled in increasing order of their indices (i.e. $X_i$ is scheduled before $X_j$ if $i<j$). If this wasn't the case we could reorder the jobs of $\calX$ so that this is the case without increasing the number of late tasks (since in the preferred schedules the tasks of $\calX$ are always scheduled in increasing order of their indices). Moreover, 
Indeed, in the preferred schedules of all the agents, $X_i$ is completed at time  $i s_T + (i-1)$. If in $\sigma^*$, $X_i$ would not be completed at latest at time  $i s_T + (i-1)$, then it would mean that another job $T\in (\calJ - \calX)$ is scheduled before $X_i$. In this case, swapping $T$ and $X_i$ would not increase the number of late jobs. This is the case because scheduling $X_i$ before $i s_T + (i-1)$ decreases by $\mu s_\Sigma$ (this is the number of agents) the number of late jobs, and the length of $X_i$ is smaller than the one of $T$ so the swap of $X_i$ and $T$ will not delay jobs other than $T$.   

Second, we will show that in $\sigma^*$, job $L$ is scheduled in the last position. For this we consider two cases:
\begin{enumerate}
\item Let us first consider what happens if this job is not late, i.e., if it is scheduled in $\sigma^*$ at latest at time $D$. Let us now look at the $2\mu-2$ last jobs of $\sigma^*$. Each of these jobs is late for  at least $\mu s_\Sigma -2$ agents (all the agents except two), since for each job of $\calP$, only two agents have it in one of their $2\mu-2$ last positions of their preferred schedule (and all these jobs are of length $s_T$). Thus the total number of late jobs is at least $(\mu s_\Sigma - 2)(2\mu-2)$.

\item Let us now consider the case where job $L$ is late in $\sigma^*$: it is scheduled after time $D$. In this case, it will be late for all the agents, so we can assume that it is scheduled in the last position of $\sigma^*$. % (otherwise, putting it at an earlier position will not increase the number of late jobs). 
Thus, all the jobs of $\calJ\setminus \{L\}$ are scheduled before $D+p_L$ in $\sigma^*$ (this is true since the length of $L$ is larger that the total length of the jobs of $\calJ \setminus \{L\}$). Since each job of $\calP\cup\calR$ appears only once before $D+p_L$ in the preferred schedules of the agents, each job of $\calP\cup\calR$ will be late for at most one agent: the number of jobs of $\calP\cup\calR$ which will be late is thus at most the number of jobs of $\calP\cup \calR$: $\mu^2 s_\Sigma$. The number of jobs of $\calF$ which are scheduled before $D+p_L$ in the preferred schedules of the agents is $\mu s_\Sigma$ (indeed for each $s\in S$, job $F_s$ appears for $s$ agents just before job $X_i$, with $i\in[\mu]$). Thus, the number of jobs from $\calF$ which will be late in $\sigma^*$ is at most equal to $\mu s_\Sigma$. Job $L$ is late for all the $\mu s_\Sigma$ agents, and we have already seen that the jobs of $\calX$ are not late in $\sigma^*$. Therefore, the total number of jobs which will be late in $\sigma^*$ is at most $(\mu +2)\mu s_\Sigma$. This is smaller that the lower bound of the number of jobs late if $L$ is not late in $\sigma^*$ (this lower bound was $(\mu s_\Sigma - 2)(2\mu-2)$, and we have assumed that $\mu>4$).
\end{enumerate}
From the case analysis we conclude that in $\sigma^*$, $L$ is scheduled in the last position.

Third, we infer that the jobs of $\calP\cup\calR$ are scheduled in $\sigma^*$ at soonest at time $D$. If this was not the case, a job of $\calF$ would be scheduled after time $D$. We argue that by swapping this job with a job of $\calP\cup\calR$ scheduled before $D$ we would not increase the number of late jobs. Indeed, if a job of $\calP\cup\calR$ is completed at latest at time $D$ then it will be, in the best case, scheduled on time for all the agents, whereas if it is completed after time $D$ (but before $D+p_L$) it will be late for (only) one agent. Moving forward a job $F_s\in\calF$ which is completed after time $D$ so that it is now completed at latest at time $D-1$ ($F_s$ is shorter that any job of $\calP\cup\calR$) will decrease the number of late jobs by at least $s\geq 1$ since $s$ agents have this job completed at time $D-1$ in their preferred schedules. Since $F_s$ is shorter than any job of $\calP\cup\calR$, then doing such a swap does not make any other job late.

We have seen that in $\sigma^*$, the jobs scheduled before $D$ are the jobs of $\calX\cup \calF$, and that the jobs of $\calX$ are not late (i.e. $X_i$ is scheduled at latest at time $i s_T + i-1$). Let us now see how these jobs are scheduled. 

Recall that for each $s\in S$, job $F_s$ is completed for $s$ agents at time $s_T$, for $s$ agents at time $2s_T+1$, and so forth (for each $i\in[\mu]$ it is completed $s$ times in $i s_T + (i-1)$). Therefore, if, in $\sigma^*$, job $F_s$ is completed at latest at time $s_T$, is will not be late for any agent; if it is completed after $s_T$ but at latest at time $2 s_T +1$, it will be late for $s$ agents, and so forth. For each $i\in [\mu]$ we define Slot $i$ as the time interval $[(i-1)s_T + i-1, is_T + i-1 ]$. As we have seen, if $F_s$ is completed in Slot $i$ ($i\in [\mu]$), it will be late for $(i-1) s$ agents. Since the length of $F_s$ is $s$, it means that \emph{each unit} of a job $F \in \calF$ such that $F$ is completed in Slot $i$ adds $(i-1)$ to the number of late jobs. Let $U_i$ be the sum of the lengths of jobs of $\calF$ completed in Slot $i$. The number of late jobs of $\calF$ is then $\sum_{i=1}^\mu U_i (i-1)$. 
Therefore, to minimize the number of late jobs, $U_1$---the total length of jobs completed in Slot 1---should be as large as possible, and then, for the remaining jobs, $U_2$ should be as large as possible, and so forth. Since $X_i$ has to be scheduled at latest at time $i s_T + (i-1)$, i.e. at the beginning of Slot $i+1$, the number of late jobs is minimized if $X_i$ is scheduled exactly at time $i s_T + (i-1)$. This can be done only if it is possible to schedule the jobs of $\calF$ in the slots between the jobs of $\calX$. Since we have assumed that there is a ``yes" solution to the 3-partition problem on instance $I$, it is possible to partition the jobs of $\calF$ in triples with lengths summing up to $s_T$. Thus, each of this triple will correspond to a triple of jobs in the same slot. Hence the optimal solution of $\Sigma$-$U$ starts as follows: $\Big(\calF_1, X_1, \calF_2, X_2, \calF_\mu, X_\mu \Big)$, where each set $\calF_i$ consists of jobs from $\calF$ with lengths summing up to $s_T$. This completes the proof. 
\end{proof}

%%%%%%%%%%%%%%

Nonetheless, if the jobs have the same size, the problem can be solved in polynomial time (highlighting the additional complexity brought by the main element of the collective scheduling). Our proof uses the idea of~\citet{dwork2001rank} who proved an analogous result for the Spearman distance.

\begin{proposition}\label{thm:tardiness_sizes_the_same}
If all jobs have the same size, for each delay cost $f \in \{T, U, L, E, D, \mathit{SD}\}$ rule $\sum$-$f$ can be computed in polynomial time.
\end{proposition}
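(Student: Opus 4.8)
The plan is to reduce the computation of a $\Sigma$-$f$-optimal schedule to a minimum-cost bipartite perfect matching (the assignment problem), mirroring the reduction of~\citet{dwork2001rank} for the Spearman (footrule) distance. The crucial feature of the equal-size case is that a schedule is, up to the fixed common length, just a permutation of the jobs: if every job has size $p$ and the schedule has no gaps, then a job placed at position $k$ (that is, with $\pos(J,\tau)=k$) completes exactly at time $(k+1)p$, independently of which job occupies that position. Hence for every schedule $\pi$ we have $C_J(\pi) = (\pos(J,\pi)+1)\,p$.

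Using this, I would first rewrite the aggregated cost so that it becomes separable across jobs. Swapping the order of summation in the definition of the $\Sigma$-aggregation gives
\begin{align*}
\sum_{a\in N} f(\tau,\sigma_a) = \sum_{a \in N}\sum_{J\in\calJ} f\big(C_J(\tau), C_J(\sigma_a)\big) = \sum_{J\in\calJ}\ \underbrace{\sum_{a\in N} f\big((\pos(J,\tau)+1)p,\, C_J(\sigma_a)\big)}_{=:\, w(J,\,\pos(J,\tau))} \text{.}
\end{align*}
For each job $J$ and each position $k\in\{0,\dots,m-1\}$ define the weight $w(J,k)=\sum_{a\in N} f\big((k+1)p,\, C_J(\sigma_a)\big)$, i.e., the total $f$-cost incurred by all agents if $J$ is scheduled at position $k$. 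Note that $w(J,k)$ depends only on the pair $(J,k)$ and not on the placement of the other jobs, precisely because completion times are pinned down by positions alone; and all $m^2$ weights can be computed directly in $O(m^2 n)$ time.

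With this decomposition, a schedule $\tau$ corresponds exactly to a bijection from the set of jobs $\calJ$ to the set of positions $\{0,\dots,m-1\}$, and its total cost is $\sum_{J\in\calJ} w(J,\pos(J,\tau))$. Minimizing $\Sigma$-$f$ is therefore exactly the problem of finding a minimum-weight perfect matching in the complete bipartite graph between jobs and positions with edge weights $w(J,k)$. This is the classical assignment problem, solvable in polynomial time (e.g., by the Hungarian algorithm in $O(m^3)$ time). Crucially, the argument never uses any property of $f$ beyond the ability to evaluate $f(c_i,d_i)$ in polynomial time, so the same reduction handles all six delay costs $\{T,U,L,E,D,\mathit{SD}\}$ uniformly.

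The one step that carries the real content is the separability observation in the first paragraph: it is only because all jobs share the same length that the cost contributed by a job is a function of its position alone, which is what lets the objective split into per-$(\text{job},\text{position})$ weights and collapses the exponential search over schedules into a polynomial-size matching instance. Once this is established, the remainder is a direct invocation of a standard matching algorithm together with a routine running-time bound, and I do not expect any genuine obstacle there.
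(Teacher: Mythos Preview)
Your proposal is correct and follows essentially the same approach as the paper: both reduce the problem to a minimum-cost bipartite matching (the assignment problem) by observing that with equal-size jobs the cost contributed by each job depends only on its position, then solve it with the Hungarian algorithm. Your write-up is in fact a bit more explicit about the separability step and the running time, but the underlying idea is identical.
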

\begin{proof}
    Let us fix $f \in \{T, U, L, E, D, \mathit{SD}\}$. We reduce the problem of finding a collective schedule to the assignment problem. Observe that when the jobs have all the same size, say $p$, then in the optimal schedule each job should be started at time $\ell p$ for some $\ell \in \{0, \dots, m-1 \}$. Thus, we construct a bipartite graph where the vertices on one side correspond to $m$ jobs and the vertices on the other side to $m$ possible starting times of these jobs. The edge between a job $J$ and a starting time $\ell p$ has a cost which is equal to the total cost caused by job $J$ being scheduled to start at time $\ell p$.
  The cost can be computed independently of how the other jobs are scheduled, and is equal to $\sum_{a \in N} f(\ell p + 1, C_i(\sigma_a) \big)$. Thus, a schedule that minimizes the total cost corresponds to an optimal assignment of $m$ jobs to their $m$ slots. Such an assignment can be found in polynomial time, e.g., by the Hungarian algorithm.   
\end{proof}

We conclude this section by observing that hardness of computing $\sum$-$K$ and $\sum$-$S$ rules can be deduced from the hardness of computing Kemeny rankings~\cite{dwork2001rank}. 
\begin{proposition}
Computing $\sum$-$K$ and $\sum$-$S$ is $\np$-hard even for $n=4$ agents and when all jobs have the same unit size.
\end{proposition}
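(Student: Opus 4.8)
The plan is to handle the two rules separately, exploiting that under unit-size jobs a preferred schedule is a total order over $\calJ$ and a collective schedule is again such an order, so both rules become classical rank-aggregation procedures. I would dispatch $\sum$-$K$ first. For unit sizes $\sum$-$K$ is exactly the Kemeny rule: $K(\tau,\sigma_a)$ equals the Kendall tau distance between the rankings $\tau$ and $\sigma_a$, so minimizing $\sum_{a\in N}K(\tau,\sigma_a)$ is Kemeny aggregation. Dwork, Kumar, Naor and Sivakumar~\cite{dwork2001rank} proved Kemeny aggregation to be $\np$-hard already for profiles of $n=4$ full rankings. Reading each of their four voters' rankings as a preferred unit-job schedule gives verbatim a hard instance of $\sum$-$K$ with $n=4$ unit jobs, and an optimal collective schedule is exactly an optimal Kemeny ranking; hence a polynomial algorithm for $\sum$-$K$ would decide their $\np$-hard instances. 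This settles the $\sum$-$K$ half.

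For $\sum$-$S$ I would try to mirror this reduction. Since the paper's $S$ is the Spearman footrule distance, two routes suggest themselves: use the Diaconis--Graham inequalities $K(\tau,\sigma_a)\le S(\tau,\sigma_a)\le 2K(\tau,\sigma_a)$ to try to transport hardness from Kemeny to footrule aggregation, or build a dedicated gadget forcing the footrule-optimal schedule to encode a solution of a feedback-arc-set-style source instance. I would start from the same four-ranking family used for $\sum$-$K$ and attempt to argue that every footrule-optimal collective schedule must respect the intended pairwise orientations.

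The main obstacle appears exactly here and I expect it to be decisive, so this $\sum$-$S$ step is really the crux of the proposition. The footrule objective does not couple the jobs: since $\sum_{a\in N}S(\tau,\sigma_a)=\sum_{J\in\calJ}\sum_{a\in N}\bigl|\pos(J,\sigma_a)-\pos(J,\tau)\bigr|$, the contribution of each job $J$ depends only on the single position $\pos(J,\tau)$ it receives, independently of the other jobs. Choosing the collective schedule is therefore a minimum-cost assignment of the $m$ jobs to the $m$ positions, solvable in polynomial time---this is precisely Dwork et al.'s footrule-via-matching observation. The same conclusion follows internally: by Proposition~\ref{prop:TandS} we have $S=2T$ for unit sizes, so $\sum$-$S=2\sum$-$T$, and Proposition~\ref{thm:tardiness_sizes_the_same} computes $\sum$-$T$ for equal-size jobs in polynomial time. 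Consequently no reduction of the kind sketched above can go through, and any purported proof of the $\sum$-$S$ claim would have to either introduce a job-coupling that I do not see how to produce under unit sizes or else contradict these two propositions. In short, the clean content I can deliver is the $\sum$-$K$ hardness together with a precise diagnosis of why the $\sum$-$S$ assertion is the genuine sticking point.
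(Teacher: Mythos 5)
Your $\sum$-$K$ half is exactly the route the paper itself takes: the proposition carries no standalone proof, only the preceding remark that hardness ``can be deduced from the hardness of computing Kemeny rankings'' of Dwork et al., and since the swap cost $K$ ignores processing times, $\sum$-$K$ on unit-size jobs is verbatim Kemeny aggregation, whose $\np$-hardness for four full rankings is what you invoke. That part is correct and matches the paper. Your diagnosis of the $\sum$-$S$ half is also correct, and it points at a genuine flaw in the \emph{statement} rather than a gap in your attempt. By Proposition~\ref{prop:TandS}, for unit-size jobs $S(\tau,\sigma)=2T(\tau,\sigma)$ (and indeed $S=D$, since completion times are positions shifted by one), so a $\sum$-$S$-optimal schedule is exactly a $\sum$-$T$-optimal (equivalently $\sum$-$D$-optimal) one, and Proposition~\ref{thm:tardiness_sizes_the_same} computes these in polynomial time via the job-to-slot assignment construction --- which is precisely the footrule-via-matching algorithm of the very reference the paper cites for hardness. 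As you observe, the footrule objective decomposes as
\begin{align*}
\sum_{a \in N} S(\tau,\sigma_a) = \sum_{J \in \calJ}\,\sum_{a \in N}\bigl|\pos(J,\sigma_a)-\pos(J,\tau)\bigr| \text{,}
\end{align*}
so each job's contribution depends only on its own position in $\tau$; no reduction can create the pairwise coupling on which Kendall-style hardness rests. Hence the $\sum$-$S$ claim is false unless $\mathrm{P}=\np$, it contradicts the paper's own Propositions~\ref{prop:TandS} and~\ref{thm:tardiness_sizes_the_same}, and the proposition should assert hardness only for $\sum$-$K$ (for $S$, hardness in this paper appears only under the $\max$ aggregation, via $S=D$ and Theorem~\ref{thm:max-t-np-hard}). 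Your proposal proves everything in the statement that is actually provable.
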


\begin{comment}
Proposition~\ref{thm:tardiness_sizes_the_same} has interesting implications. For unit-size jobs our problem of finding a schedule minimizing the total tardiness closely resembles the problem of finding a winner according to the Kemeny rule, which is known to be $\np$-hard~\citep{bar-tov-tri:j:who-won}. The difference between our problem and the Kemeny rule is in how the costs are defined. In particular, the tardiness measure penalizes a ranking (a schedule $\tau$) for each candidate (each job $J$) and for each position by which the candidate (job $J$) is shift down the ranking (is delayed in $\tau$) with respect to the reference ranking (reference schedule).  
Interestingly, if one would accept our way of computing the costs, then the scheduling algorithm minimizing the total tardiness for unit-size jobs defines a new election rule which is computable in polynomial time. By Proposition~\ref{prop:TandS} this rule finds a ranking minimizing the total Spearman distance to the input rankings of the agents.
\end{comment}

%As a result of Proposition~\ref{prop:TandS} and the known fact that $K(\tau, \sigma) \leq S(\tau, \sigma) \leq 2K(\tau, \sigma)$~\citep{Fagin04comparingpartial} we get the following corollary, which again shows similarities between the two methods of aggregating rankings.
% as well as alternative to~\cite{dwork2001rank} and picking a permutation at random.

%\begin{corollary}
%The rule induced by the scheduling method $\sigma$-$T$ for unit-size jobs is a 2-approximation algorithm for the Kemeny rule. 
%\end{corollary}

\subsection{$L_p$-norm of Delay Costs, $p>1$}% and $L_p$-norm: $L_p$-$T$, $p > 1$}
%{change the name of this section, especially if we generalize Theorem 3 to other cost functions}
%krz to Fanny: done

We start by observing that the general case is hard even for two agents. The proof of the below theorem works also for $p = \infty$, i.e., for $\max$-$\{ T, E, D \}$. 

\begin{theorem}\label{thm:L_p}
  For each $p>1$, finding a schedule returned by $L_p$-$\{ T, E, D \}$ is $\np$-hard, even for two agents. 
\end{theorem}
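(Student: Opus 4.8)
The plan is to reduce from the (binary-encoded, weakly $\np$-hard) \textsc{Partition} problem: given positive integers $a_1, \ldots, a_k$ with $\sum_i a_i = 2B$, decide whether some $A \subseteq [k]$ satisfies $\sum_{i \in A} a_i = B$. The engine of the reduction is the strict convexity of the $L_p$ norm for $p>1$: among all pairs $(x, S-x)$ with a fixed sum $S$, the value $(x^p + (S-x)^p)^{1/p}$ is uniquely minimized at the balanced point $x = S/2$, and $\max(x, S-x)$ is likewise minimized there, which is exactly what yields the $p=\infty$ case. So it suffices to build a two-agent instance in which the \emph{total} cost borne by the two agents is a fixed constant $S$ over all reasonable schedules, while the way this cost splits between the two agents ranges precisely over the subset sums of $a_1, \ldots, a_k$; then a perfectly balanced split is attainable iff \textsc{Partition} is a yes-instance.

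For the construction I would use one gadget per integer. For each $i$, introduce two jobs $A_i, B_i$, both of length $a_i$, and place them (for both agents) consecutively in a private window $W_i$ of length $2a_i$, with the windows laid out back-to-back in the common order $W_1, W_2, \ldots, W_k$; thus both preferred schedules are gap-free permutations of the $2k$ jobs that agree everywhere except inside the windows. Inside $W_i$, agent $1$ prefers the order $(A_i, B_i)$ and agent $2$ prefers the reverse $(B_i, A_i)$. A direct check of the due dates $d = C(\cdot, \sigma_a)$ then shows that, whichever of the two internal orders the collective schedule $\tau$ picks in window $i$, exactly one agent is ``hurt'': for tardiness $T$ and earliness $E$ the hurt agent incurs cost $a_i$ and the other incurs $0$, while for absolute deviation $D$ the split is $(2a_i, 0)$ or $(0, 2a_i)$. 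In every case the per-gadget contribution to the \emph{sum} of the two costs is a constant independent of $\tau$'s choice, so the total is $S = 2B$ (for $T, E$) or $S = 4B$ (for $D$), and agent $1$'s total cost ranges over the set $\{\sum_{i \in A} a_i : A \subseteq [k]\}$ (scaled by the same factor) as the internal orders vary.

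Next I would prove a ``conforming schedule'' lemma: some optimal collective schedule keeps each gadget's two jobs consecutive inside its window and respects the common order $W_1, \ldots, W_k$. This follows from an exchange argument together with the fact that $L_p$ (and $\max$) is monotone in each agent's cost: since the two preferred schedules agree on every \emph{cross}-window comparison, any deviation from the common order can be undone by a swap that increases neither agent's cost, hence cannot increase the objective. With conformity in hand, the achievable objective values are exactly $(x^p + (S-x)^p)^{1/p}$ for $x$ ranging over the (scaled) subset sums, so the optimum equals the balanced value $2^{1/p} B$ (resp.\ the analogous threshold for $D$ and for $\max$) if and only if the balanced split $x = S/2$ is attainable, i.e.\ iff \textsc{Partition} has a solution. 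Comparing the optimal objective to this threshold decides \textsc{Partition}, and since all lengths are written in binary the reduction is polynomial, yielding $\np$-hardness for every $p>1$ and for $p=\infty$.

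The main obstacle is the conforming-schedule lemma --- ruling out that some cleverly interleaved, non-gadget-respecting schedule beats every conforming one by driving the sum of the two costs below $S$. The key point to make rigorous is that moving a job out of its window can only delay (for $T$) or un-align (for $E$, $D$) other jobs for \emph{both} agents simultaneously, precisely because the agents disagree only inside windows; monotonicity of the objective then closes the argument. A secondary point worth verifying carefully is that the single gadget above realizes the constant-sum and clean-split behaviour for all three cost functions $T$, $E$, $D$ and for both finite $p$ and $p=\infty$, which is exactly what the due-date computation is designed to guarantee.
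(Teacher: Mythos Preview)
Your proposal is correct and follows essentially the same approach as the paper: reduce from \textsc{Partition}, create two equal-length jobs $A_i,B_i$ per integer, have two agents who agree on the inter-pair order but disagree on each intra-pair order, and use strict convexity of the $L_p$ norm (for $p>1$, including $p=\infty$) to equate ``optimum $=$ balanced threshold'' with ``\textsc{Partition} is a yes-instance.'' The paper's construction and yours are the same gadget.

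One small point worth tightening: the paper additionally sorts the integers so that the common inter-pair order is SPT (shortest first). This is what makes the adjacent-swap argument behind your ``conforming schedule'' lemma go through cleanly for tardiness. If $X$ is in an earlier window than $Y$ and you swap an adjacent pair $Y,X\to X,Y$, a short case analysis gives that each agent's tardiness changes by at most $p_X-p_Y$; with SPT order this is $\le 0$, so neither agent's cost increases and monotonicity of $L_p$ finishes it. With an arbitrary window order you may have $p_X>p_Y$, and then the per-agent swap inequality is not immediate (though the global lower bound on the sum can still be rescued, e.g.\ via the triangle-inequality argument you implicitly use for $D$). So either (i) sort the windows by length, as the paper does, or (ii) for $T$ argue via $T=D/2+L/2$ together with $\sum_J C_J(\tau)$ being minimized at SPT and the triangle-inequality bound $D_a+D_b\ge 4B$. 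Either route closes the ``main obstacle'' you correctly flagged.
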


\begin{proof}
  %Warning: this text is repeated in the appendix, where it is modified; this is an old version
  %The problem is trivially in NP.
  Let us fix $p > 1$. We show a reduction from \textsc{Partition}.
  In \textsc{Partition} we are given a set of integers $S = \{ s_1, \ldots, s_n\}$, $s_i < s_{i+1}$,  and we ask whether $S$ can be partitioned in two sets $S_a$ and $S_b$ that have the same sum, $s = \nicefrac{1}{2} \cdot \sum s_i$.

  We construct an instance of the problem of finding an optimal collective schedule according to $L_p$-$T$ as follows (our construction is inspired by~\citet{agnetis2004scheduling}).
  For each $s_i \in S$ we introduce two jobs $J_i^{(a)}$ and $J_i^{(b)}$, both with length $p_i^{(a)}=p_i^{(b)}=s_i$.
  We have two agents, $a$ and $b$.
  Both agents prefer a schedule executing jobs in order of their increasing lengths (an SPT schedule).
  For each pair of jobs $(J_i^{(a)}, J_i^{(b)})$ with equal lengths, agent $a$ prefers $J_i^{(a)}$ to $J_i^{(b)}$, while agent $b$ prefers $J_i^{(b)}$ to $J_i^{(a)}$.
  Thus, the preferred schedule $\sigma_a$ of agent $a$ is $(J_1^{(a)}, J_1^{(b)}, J_2^{(a)}, J_2^{(b)}, \dots, J_n^{(a)}, J_n^{(b)})$; while $\sigma_b$ is $(J_1^{(b)}, J_1^{(a)}, J_2^{(b)}, J_2^{(a)}, \dots, J_n^{(b)}, J_n^{(a)})$. We ask whether there exist a schedule with a cost of $\sqrt[p]{2 s^p}$.
  
  Assume there exists a partition of $S$ into two disjoint sets $S_a, S_b$ where $\sum_{v \in S_a}v =\sum{v \in S_b}v = s$.
  We construct an SPT schedule $\sigma=( J_1^{(\cdot)}, J_1^{(\cdot)}, J_2^{(\cdot)}, J_2^{(\cdot)}, \dots, J_n^{(\cdot)}, J_n^{(\cdot)} )$.
  For each pair of jobs $\{ J_i^{(a)}), J_i^{(b)}) \}$ of equal length, if $s_i \in S_a$,
  the jobs are scheduled in order $(J_i^{(a)}, J_i^{(b)})$;
  otherwise (i.e., $s_i \in S_b$) in order $(J_i^{(b)}, J_i^{(a)})$.
  For each pair of jobs, the order $(J_i^{(a)}, J_i^{(b)})$ increases agent $b$'s tardiness by $p_i^{(a)}=s_i$; while agent $a$'s tardiness is not increased. Similarly, the order $(J_i^{(b)}, J_i^{(a)})$ increases agent $a$'s tardiness by $p_i^{(b)}=s_i$. Consequently, $T(\sigma, \sigma_a) = S_b = s$ and $T(\sigma, \sigma_b) = S_a = s$, and thus the total cost is $\sqrt[p]{2 s^p}$.

  Assume there is a schedule $\sigma$ with the total cost of $\sqrt[p]{2 s^p}$.
  We first show it has to be an SPT schedule.
  For the sake of contradiction, assume that a non-SPT schedule $\sigma'$ has the minimal cost.
  Pick two jobs $J_i$, $J_j$ scheduled in a non-SPT order in $\sigma'$:
  $J_i$ is scheduled before $J_j$, but $p_i > p_j$. 
  If we switch the order of jobs, the jobs $J_k$ executed between $J_j$ and $J_i$ complete earlier.
  Moreover, as both agents prefer $J_j$ to $J_i$,
  the T tardiness measure drops ($J_k$ are less late in the switched schedule).
  Thus, the switched schedule has a lower cost, which contradicts the assumption that $\sigma'$ is optimal.

  Next, observe that the sum of agents' tardiness measures is at least $2 s$.
  Consider an SPT schedule. For each pair of jobs $(J_i^{(a)}, J_i^{(b)})$,
  if $J_i^{(a)}$ is scheduled before $J_i^{(b)}$,
  $T(\sigma, \sigma_b)$ is increased by $p_i$;
  otherwise, $T(\sigma, \sigma_a)$ is increased by $p_i$.
  As $T(\sigma, \sigma_a) + T(\sigma, \sigma_b) = 2s$, by the convexity of the $L_p$-norm (with $p>1$),
  a schedule with the total cost of $\sqrt[p]{2 s^p}$ has to be a schedule with the minimal cost. Further, this cost is equal to $\sqrt[p]{2 s^p}$, if and only if the total tardiness of agents $a$ and $b$ are equal. 

  Now, observe that the order of a pair of jobs with equal length $\{ J_i^{(a)}), J_i^{(b)}) \}$ defines thus the partition: 
  the order $(J_i^{(b)}, J_i^{(a)})$ corresponds to $s_i$ in $S_a$;
  the order $(J_i^{(a)}, J_i^{(b)})$ corresponds to $s_i$ in $S_b$.
  As $T(\sigma, \sigma_a) = T(\sigma, \sigma_b) = s$, 
  $\sum_{s_i \in S_a} s_i = \sum_{s_i \in S_b} s_i = s$.  
\end{proof}

Moreover, as shown below, $\max$-$\{T, E, D, \mathit{SD}\}$ is $\np$-hard even for unit-size jobs.

\begin{theorem}\label{thm:max-t-np-hard}
For each delay cost $f \in \{T, E, D, \mathit{SD}\}$, finding a schedule returned by $\max$-$f$ is $\np$-hard, even for unit-size jobs. 
\end{theorem}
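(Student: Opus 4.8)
The plan is to first collapse the four cost functions into one, and only then design a hardness reduction. For unit-size jobs a schedule is a permutation and the completion time of a job equals its position, so every cost depends only on the displacement $\pos(J,\tau)-\pos(J,\sigma_a)$. By Proposition~\ref{prop:TandS}, and more precisely by the symmetric step in its proof showing that the total upward displacement equals the total downward displacement, for unit jobs we have $T(\tau,\sigma_a)=E(\tau,\sigma_a)=\tfrac12 D(\tau,\sigma_a)$. Hence $\max$-$T$, $\max$-$E$ and $\max$-$D$ have the same optimal schedules and proportional objective values, so it suffices to prove $\np$-hardness for one of them. For $\mathit{SD}$ I would use that $\delta^2\ge\delta$ for every nonnegative integer $\delta$, so $\mathit{SD}(\tau,\sigma_a)\ge D(\tau,\sigma_a)$ always, with equality exactly when all displacements lie in $\{0,1\}$; thus if the reduction is built so that every schedule meeting the target bound uses only unit displacements, then $\max$-$\mathit{SD}$ and $\max$-$D$ coincide at the threshold and the same instance witnesses hardness of all four costs.

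For the core reduction I would reduce from an $\np$-hard constraint-satisfaction or balanced-partition/matching problem, exploiting that the objective is a \emph{maximum} over agents: a schedule $\tau$ meets a threshold $B$ iff it keeps \emph{every} agent's tardiness at most $B$ simultaneously, which is precisely a conjunction of constraints. Each elementary choice of the source instance would be encoded by the local order (or the ``band'', early versus late) of a small, dedicated block of jobs, and each constraint by one agent whose preferred schedule is tuned so that the tardiness it incurs is at most $B$ exactly when the encoded constraint is satisfied and strictly larger otherwise. Since a single job occupies a single position, the assignment read off from $\tau$ is automatically consistent across all agents, which removes the need for explicit consistency gadgets.

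Correctness would then split in the two standard directions. For the forward direction I would take a feasible solution of the source instance, place each gadget in the configuration dictated by its value and all auxiliary jobs in their canonical slots, and check that every constraint-agent sees tardiness at most $B$ with only unit displacements. For the converse, from any $\tau$ with maximum tardiness at most $B$ I would first establish rigidity---each gadget must sit in one of its two canonical configurations, since any other placement already pushes some agent past $B$---then read off the induced assignment and argue that every constraint is satisfied.

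The hard part will be the quantitative design of the preferred schedules, because each agent's tardiness is a global sum over \emph{all} jobs, so the positions of the gadget jobs not belonging to a given constraint, together with the filler jobs, contribute ``cross-talk'' to that agent's cost. The main obstacle is therefore to pad the instance with enough dummy jobs and auxiliary agents to pin almost all positions, normalising this background contribution to a fixed constant for every agent, so that only the relevant gadget jobs supply the discriminating amount and the satisfied and violated cases are separated by the bound using displacements of at most one unit. Once the gadgets are calibrated so that each agent is tight exactly on its own constraint and no spurious low-cost schedule is introduced, the equivalence with the source instance---and hence strong $\np$-hardness for $\max$-$\{T,E,D,\mathit{SD}\}$ on unit-size jobs---follows.
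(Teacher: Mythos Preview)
Your preliminary collapse is correct and in fact cleaner than what the paper does: for unit-size jobs $T(\tau,\sigma_a)=E(\tau,\sigma_a)=\tfrac12 D(\tau,\sigma_a)$, so $\max$-$T$, $\max$-$E$, $\max$-$D$ share optimal schedules; and since $\mathit{SD}\ge D$ with equality when all displacements lie in $\{0,\pm1\}$, the $\mathit{SD}$ case follows once the YES-witness of the reduction uses only unit displacements. (The NO direction is automatic from $\mathit{SD}\ge D$, so you do not need the stronger ``every schedule meeting the bound'' hypothesis.)

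The genuine gap is the reduction itself, which you describe only as a framework and whose main difficulty---cross-talk between gadgets---you identify but do not resolve. The paper's key idea is to pick a source problem that makes this difficulty disappear: \textsc{ClosestString} over the binary alphabet. Given strings $s^{(1)},\dots,s^{(n)}\in\{0,1\}^m$ and a bound $d$, create for each position $i\in[m]$ two unit jobs $J_i^{(a)},J_i^{(b)}$; agent $a$'s preferred schedule lists the pairs in order $1,\dots,m$, ordering the $i$-th pair according to the $i$-th bit of $s^{(a)}$. All agents agree that the $i$-th pair occupies positions $\{2i-1,2i\}$, and an exchange argument shows some optimal collective schedule is of this same ``regular'' shape. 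In a regular schedule every job's displacement against every agent is $0$ or $\pm1$, and the tardiness of agent $a$ is exactly the Hamming distance between the bit-string read off the collective schedule and $s^{(a)}$; hence $\max$-$T\le d$ iff a central string exists.

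This choice is precisely what dissolves the obstacle you flag: the ``max over agents'' is native to \textsc{ClosestString}, and since all agents share the same block structure there is zero cross-talk---no padding, no auxiliary agents, no background normalisation. Your framework is compatible with this (each bit is a two-job gadget, each input string a constraint-agent), but without naming \textsc{ClosestString} you are headed toward a much more laborious construction than the problem requires.
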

\begin{proof}
We reduce from the \textsc{ClosestString}, which is $\np$-hard even for the binary alphabet. Let $I$ be an instance of \textsc{ClosestString} with the binary alphabet. In $I$ we are given a set of $n$ 0/1 strings, each of length $m$, and an integer $d$; we ask if there exists a ``central string'' with the maximum Hamming distance to the input strings no greater than $d$.

From $I$ we construct an instance $I'$ of $\max$-$f$ collective schedule in the following way. We have $2m$ jobs: for each $i \in [m]$ we introduce two jobs, $J_i^{(a)}$ and $J_i^{(b)}$. For each input string $s$ we introduce one agent: the agent puts a job $J_i^{(\cdot)}$ before $J_j^{(\cdot)}$ in her preferred schedule whenever $i<j$. Further, she puts $J_i^{(a)}$ before $J_i^{(b)}$ if $s$ has ``one'' in the $i$-th position and $J_i^{(b)}$ before $J_i^{(a)}$, otherwise.

Let us call a schedule where $J_i^{(\cdot)}$ is put before $J_j^{(\cdot)}$ whenever $i<j$, a \emph{regular schedule}.
We consider the schedule $\sigma^*$ returned by $\max$-$f$, and we show that this schedule is regular (or that it can be transformed into a regular schedule of the same cost). Let us consider that there is in $\sigma^*$ two jobs $J_i^{(\cdot)}$ and $J_j^{(\cdot)}$ such that $J_j^{(\cdot)}$ is scheduled before  $J_i^{(\cdot)}$ whereas $i<j$. 
Swapping $J_j^{(\cdot)}$ with $J_i^{(\cdot)}$
changes only $J_j^{(\cdot)}$ and $J_i^{(\cdot)}$ completion times (as jobs are unit-size).
By case analysis on both jobs' positions relative to $2i$ and $2j$ (6 cases, as $j$ is before $i$), %we see that in any case and
for any $f \in \{T, E, D, \mathit{SD}\}$, 
swapping these jobs
does not
%decrease
increase $f$.
Thus, if $\sigma^*$ is not regular, we can transform it into a regular schedule as follows: by swapping $J_1^{(\cdot)}$ with another job $J_k^{(\cdot)}$ (if $J_1^{(\cdot)}$ is not at position $1$ or $2$, whereas  $J_k^{(\cdot)}$, with $k>1$, is at one of these positions), we do not increase the cost $f$ of the schedule, and thus we obtain a schedule where the jobs $J_1^{(\cdot)}$ are at their regular positions. We continue with at most $2m$ such swaps for the remaining positions $i\in[m]$, ending up with a regular schedule.
%We continue to do (at most $2m$) swaps, looking at each step $i\in[m]$ if jobs $J_i^{(\cdot)}$ are at positions $2i-1$ and $2i$. At each step, the beginning of the schedule (positions from $i$ to $2i$) becomes regular. At the end of the process, we end with a regular optimal schedule. 
% this is because the jobs are unit-size and each agent prefers $J_i^{(\cdot)}$ to be processed before $J_j^{(\cdot)}$. This can easily be seen by case analysis: we consider when $J_i^{(\cdot)}$ is scheduled before position $2i$, between positions $2i$ and $2j$, and after position $2j$; knowing that $J_j^{(\cdot)}$ is scheduled after $J_i^{(\cdot)}$, this gives us 6 cases to analyze: in each case, the cost does not decrease when we exchange $J_i^{(\cdot)}$ and $J_j^{(\cdot)}$) .

Let us now consider that $f=T$ (resp. $f=E$). Observe that if we put $J_i^{(a)}$ before  $J_i^{(b)}$ in a regular schedule, then we increase the tardiness (resp. earliness) of each agent having ``zero'' in the $i$-th position by one. Conversely, if we schedule $J_i^{(b)}$ before  $J_i^{(a)}$, then we increase the tardiness (resp. earliness) of agents having ``one'' in the $i$-th position by one. Thus, a (regular) collective schedule corresponds to a ``central string'': $J_i^{(a)}$ scheduled before  $J_i^{(b)}$ in a collective schedule corresponds to a central string having ``one'' in the $i$-th position, and $J_i^{(b)}$ scheduled before  $J_i^{(a)}$, corresponds to ``zero''. With such interpretation, the $\max$-$T$ (resp. $\max$-$E$) of a regular schedule is simply the maximum Hamming distance to the input strings.
Consequently, we get that the answer to the initial instance $I$ is ``yes'', iff the optimal solution for $I'$ is a schedule with  $\max$-$T$ (resp. $\max$-$E$) not larger than $d$.

When $f=D$ (resp. $f=\mathit{SD}$), the principle of the proof is the same: $J_i^{(a)}$ before  $J_i^{(b)}$ in a regular schedule increases the deviation (resp. squared deviation) of each agent having ``zero'' in the $i$-th position by two. Conversely, if we schedule $J_i^{(b)}$ before  $J_i^{(a)}$, then we increase the deviation (resp. squared deviation) of agents having ``one'' in the $i$-th position by one. 
Consequently, we get that the answer to the initial instance $I$ is ``yes'', iff the optimal solution for $I'$ is a schedule with  $\max$-$D$ (resp. $\max$-$\mathit{SD}$) not larger than $2d$.
\end{proof}

\section{Experimental Evaluation}
% We have run several experiments in order to better understand the structure of optimal collective schedules.
% In this section we describe and interpret the results of these experiments.
The goal of our experimental evaluation is, first, to demonstrate that, while most of the problems are NP-hard, an Integer Linear Programming (ILP) solver finds optimal solutions for instances with reasonable sizes. Second, to quantitatively characterize the impact of collective scheduling compared to the base social choice methods. Third, to compare schedules built with different approaches (cost functions and axioms). %Due to limited space,
We use tardiness $T$ as a representative cost function: it is NP-hard in both $\Sigma$ and $\max$ aggregations; and easy to interpret.

\medskip\noindent
\textbf{Settings.} A single \emph{experimental scenario} is described by a profile with preferred schedules of the agents  and by a maximum length of a job $p_{\max}$. We instantiate the preferred schedules of agents using PrefLib~\citep{conf/aldt/MatteiW13}.
%PrefLib is a library which contains various datasets describing agents' preferences over candidates:
We treat PrefLib's candidates as jobs.
We use datasets where the agents have strict preferences over all candidates.
% , and we only consider datasets where the agents have strict preferences over all candidates.
%Further,
We restrict to datasets with both large number of candidates and large number of agents: we take two datasets on AGH course selection (\textsc{agh1} with 9 candidates and 146 agents; and \textsc{agh2} with 7 candidates and 153 agents) and \textsc{sushi} dataset with 10 candidates and 5000 agents. Additionally, we generate preferences using the \citet{mallowModel} model (\textsc{mallows}) and Impartial Culture (\textsc{impartial}), both with 10 candidates and 500 agents.
% \fixme{Fanny: does there is a reference for this model? Piotr: I'm not aware of any... I think this is a folk knowledge. I've checked a few other papers and they use Impartial Culture without citations.}
We use three different values for $p_{\max}$: $10$, $20$ and $50$.
For each experimental scenario we generate 100 instances---in each instance pick the lengths of the jobs uniformly at random between 1 and $p_{\max}$ (in separate series of experiments we used exponential and normal distributions; we found similar trends to the ones discussed below). For each scenario, we present averages and standard deviations over these 100 instances.

\medskip\noindent
\textbf{Computing Optimal Solutions.}
We use standard ILP encoding:
for each pair of jobs $(i,j)$, we introduce two binary variables $prec_{i,j}$ and $prec_{j,i}$ denoting precedence: $prec_{i,j} = 1$ iff $i$ precedes $j$ in the schedule.
($prec_{i,j} + prec_{j,i} = 1$ and, to guarantee transitivity of $prec$, for each triple $i,j,k$, we have $prec_{i,j} + prec_{j,k} - prec_{i,k} \leq 1$).
We run Gurobi solver on a 6-core (12-thread) PC. An \textsc{agh} instance takes, on the average, less than a second to solve, while a \textsc{sushi} instance takes roughly 20 seconds.
In a separate series of experiments, we analyze the runtime on \textsc{impartial} instances as a function of number of jobs and number of voters. A 20 jobs, 500 voters instance  with $\sum$-$T$ goal takes 8 seconds; while a $\max$-$T$ goal takes two minutes. A 10 jobs, 5000 voters takes 8 seconds with $\sum$-$T$ goal and 28 seconds with $\max$-$T$ goal. Finally, 20 jobs, 5000 voters take 23 seconds for with $\sum$-$T$ and 20 minutes with $\max$-$T$.
For 30 jobs, the solver does not finish in 60 minutes.
Running times depend thus primarily on the number of jobs and on the goal.
We conclude that, while the problem is strongly $\np$-hard,
it can be solved in practice for thousands of voters and up to 20 jobs.
We consider these running times to be satisfactory:
first, for a population it might be difficult to meaningfully express preferences for dozens of jobs~\cite{Mil56} (therefore, the decision maker would probably combine jobs before eliciting preferences); second, gathering preferences takes non-negligible time; and, finally, in our motivating examples (public works, lecture hall) individual jobs last hours to weeks.
%We consider these to be sufficiently large (and the running times sufficiently short) for the motivating problems.

%\fixme{krz to Piotr, Fanny: it's difficult to extract meaningful rankings for so many objects; and that the time needed to gather these data (run surveys, etc.), and also the duration of the jobs (hours-months) is much larger than these minutes needed to solve the ILP. I was surprised AAAI Rev 2 didn't get this; do you think this should be more explicit in the text? \\Fanny: if we have room, I think it would be nice to add a sentence explaining this (even if it is in buckets at the end of this last sentence).}

\begin{figure}[t!]
\begin{minipage}[h]{0.45\linewidth}
  \centering
  \includegraphics[width=\textwidth]{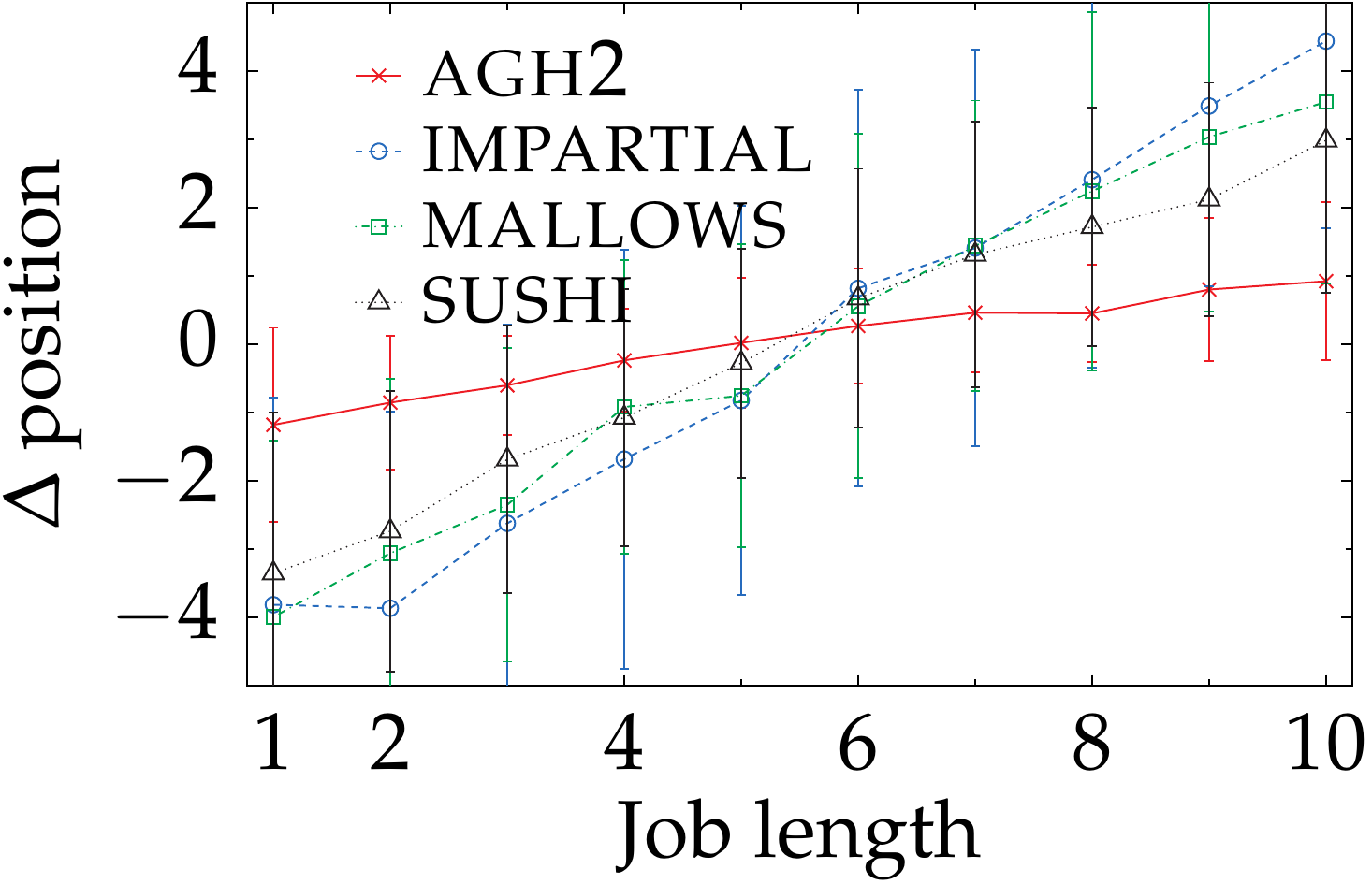}
  \vspace{-1em}
  {\Large $~~~~\Sigma$-$T$ } \medskip
\end{minipage}
\hspace{0.1cm}
\begin{minipage}[h]{0.45\linewidth}
  \centering
  \includegraphics[width=\textwidth]{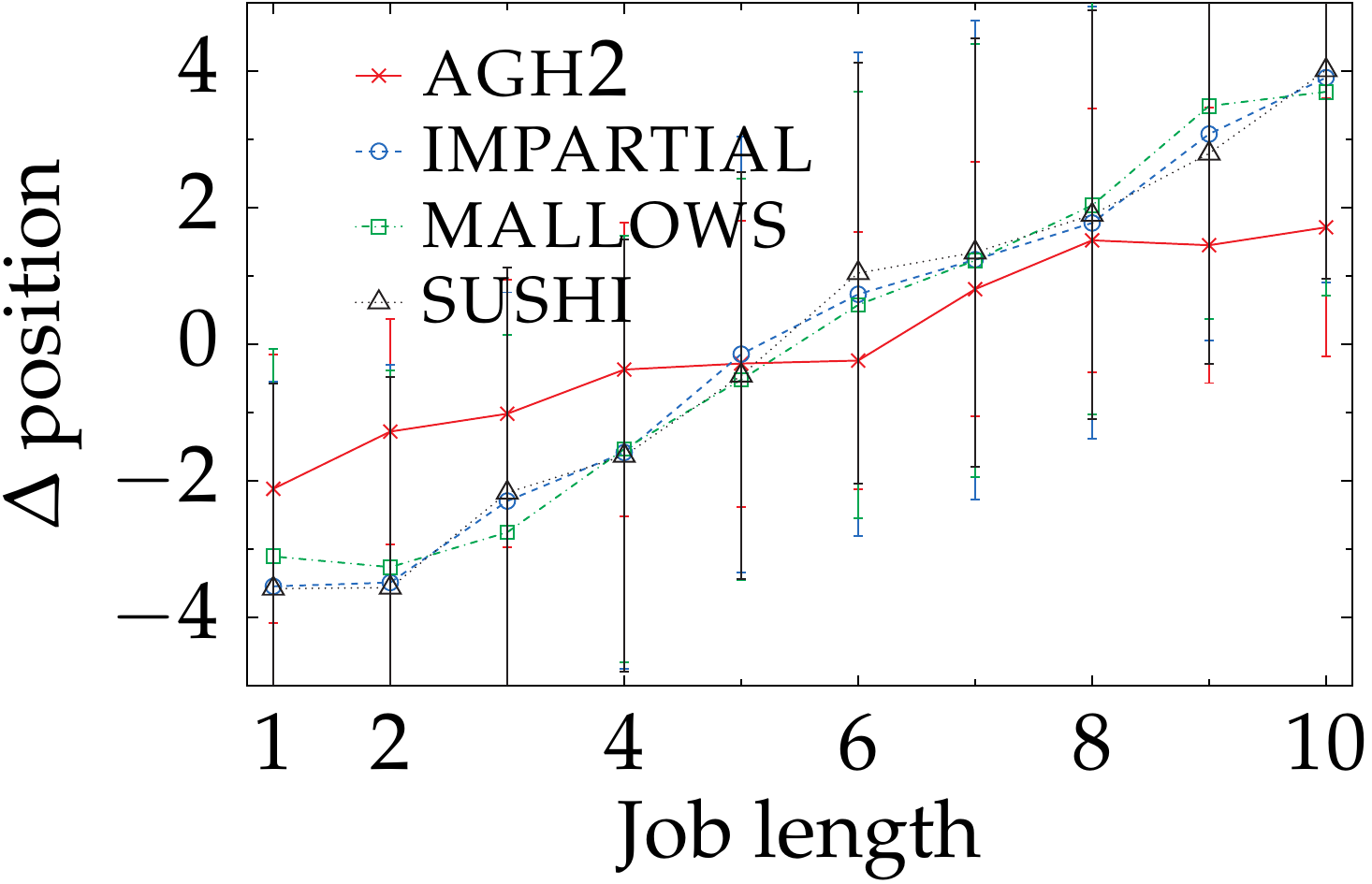}
  \vspace{-1em}
   {\Large $~~~~~\max$-$T$ } \medskip
\end{minipage}

\caption{The average change in jobs' position. A point $(x, y)$ in the plot denotes that a job of length $x$ is on the average scheduled by $y$ positions later than when we ignore jobs' durations. $p_{\max}=10$ ($p_{\max}=20$ and $p_{\max}=50$ show very similar trends.)
% \fixme{krz to krz: change X labels to start from 1; add padding; DONE}
}
\label{fig:job_advancement}
\end{figure}

\medskip\noindent
\textbf{Analysis of the Results.}
First, we analyze job's rank as a function of its length.
We compute a reference collective schedule for an instance with the same agents' preferences, but unit-size jobs (it thus corresponds to the classic preference aggregation problem with $\Sigma$-$T$ or $\max$-$T$ goal).
We then compute and analyze the collective schedules. % (separately for $\Sigma$-$T$ and for $\max$-$T$).
Over 100 instances, as jobs' durations are assigned randomly, all the jobs' durations should be in the preferred schedules in, roughly, all positions.
Thus, \emph{on the average}, short jobs should be executed earlier, and long jobs later than in the reference schedule (in contrast, in any single experiment, if a large majority puts a short job at the end of their preferred schedules, the job is not automatically advanced).
To confirm this hypothesis, for each instance and each job we compare its position to the position in the reference schedule. Figure~\ref{fig:job_advancement} shows the average position change as a function of the job lengths.
In collective schedules, short jobs (e.g., of size 1) are advanced, on the average, 2-4 positions in the schedule, compared to schedules corresponding to the standard preference aggregation problem.
The experiments thus confirm that the lengths of the jobs have profound impact on the schedule.
%should be taken into account.  %, and quantify the cost of ignoring them.

\begin{table}[t!]
  \centering
  \begin{tabular}{l|c|c|c|c|c|}
    \multirow{2}{*}{Dataset} & \multicolumn{2}{c|}{PTA C. Paradox} & \multicolumn{2}{c|}{PTA Copeland $\nicefrac{\cdot}{\cdot}$} & \multirow{2}{*}{$\Delta$Gini} \\
    & $\Sigma$-$T$ & $\max$-$T$ & $\Sigma$-$T$ & $\max$-$T$ & \\
    \hline
\textsc{agh1} & 6\% & 15\% & 1.03 & 1.23 & 0.07 \\
\textsc{agh2} & 5\% & 18\% & 1.03 & 1.28 & 0.12 \\
\textsc{sushi} & 7\% & 24\% & 1.02 & 1.22 & 0.06 \\
\textsc{impartial} & 3\% & 8\% & 1.00 & 1.01 & 0.00 \\
\textsc{mallows} & 10\% & 24\% & 1.03 & 1.21 & 0.08
  \end{tabular}
  \caption{``PTA C. Paradox'' gives the mean frequencies of violating the PTA Condorcet principle for optimal solutions for $\Sigma$-$T$ and $\max$-$T$. ``PTA Copeland $\nicefrac{\cdot}{\cdot}$'' denotes the ratio of sum/max $T$ for PTA Copeland's schedule to their optimums. ``$\Delta$Gini'' shows the average of differences in the Gini indices: Gini($\max$-$T$) - Gini($\Sigma$-$T$).}
    %between the optimal algorithms for $\Sigma$-$T$ and $\max$-$T$.
  \label{tab:exp_summary}
\end{table}

Second, we check how frequent are PTA-Condorcet paradoxes.  %(see Example~\ref{ex:condorcet-paradox}) %this example is about Pareto efficient schedules
%in the tardiness-based schedules, $\Sigma$-$T$ and $\max$-$T$.
For each instance, we counted how many out of ${m \choose 2}$ job pairs are scheduled in a non-PTA-Condorcet consistent order. Table~\ref{tab:exp_summary} shows that both $\Sigma$-$T$ and $\max$-$T$ often violate the PTA Condorcet principle.
Table~\ref{tab:exp_summary} also shows the average ratio between the ($\Sigma$ and $\max$) tardiness of schedules returned by the PTA Copeland's rule, and the tardiness of optimal corresponding schedules.
These ratios are small: roughly 3\% degradation for $\Sigma$ and 24\% for $\max$. Thus, though PTA Copeland's rule does not explicitly optimize $\max$-$T$ and
$\Sigma$-$T$, on average, it returns schedules close to the optimal for these criteria.
%significant, which illuminates the practical difference between the tardiness- and PTA-Condorcet-based approaches. 

%For $\Sigma$-$T$, 7\% of pairs on agh1 and 5\% of pairs on agh2 are PTA-Condorcet paradoxes. For $\max$-$T$, the proportion is higher: 16\% of pairs on agh1 and 19\% of pairs on agh2 (proportions are similar on different maximum lengths of jobs).

Third, we analyze how fair are $\Sigma$-$T$ and $\max$-$T$.
We analyzed Gini indices of the vectors of agents' tardiness.
% generated for $\Sigma$-$T$ and for $\max$-$T$.
%The average difference between these two values is presented in
Table~\ref{tab:exp_summary} shows that, interestingly, $\Sigma$-$T$ is more fair (smaller average Gini index), even though $\max$-$T$ seemingly cares more about less satisfied agents. Yet, the focus of $\max$-$T$ on the worst-off agent makes it effectively ignore all the remaining agents, increasing the societal inequality.

\section{Discussion and Conclusions}\label{sec:conclusions}
The principal contribution of this paper is conceptual---we introduce the notion of the \emph{collective schedule}.
We believe that collective scheduling addresses natural problems involving jobs or events having diverse impacts on the society.
%, such as public works, movie screenings during a film festival, or talks during a conference.
%(in the latter two, holding an event earlier leaves more coffee breaks for discussions).
Such problems do not fit well into existing scheduling models.
We demonstrated how to formalize the notion of the collective schedule
%optimality or acceptability
by extending well-known methods from social choice.
While collective scheduling is closely related to preference aggregation,
these methods have to be extended to take into account lengths of jobs.
%; otherwise resulting schedules are clearly suboptimal.
Notably, we proposed to judge the quality of a collective schedule by comparing the jobs' completion times between the collective and the agents' preferred schedules.
%As tardiness is not Pareto-efficient with respect to preference profiles,
%as an alternative, we also 
We also showed how to extend the Condorcet principle to take into account lengths of jobs.

%Taking a scheduling perspective also brings new results to social choice, as, when jobs have the same lengths, we end up with a preference aggregation problem. In particular, minimizing the total tardiness is a new, polynomially-solvable preference aggregation method (and also an alternative 2-approximation of the Kemeny rule).

We conclude that there is no clear winner among the proposed scheduling mechanisms. Similarly, in the classic voting, there is no clear consensus regarding which voting mechanism is the best. For example, we showed that the comparison of the cost-based and PTA-Condorcet-based scheduling exposes a tradeoff between reinforcement and the PTA Condorcet principle. Thus, the question which mechanism to choose is, for example, influenced by the subjective assessment of the mechanism designer with respect to which one of the two properties she considers more important. 

Our main conclusion from the theoretical analysis of computational complexity and from the experimental analysis is that using cost-based scheduling methods is feasible only if the sizes of the input instances are moderate (though, these instances may represent many realistic situations). In contrast, PTA Condorcet-based methods are feasible even for large instances.
We drew a boundary between $\np$-hard and polynomial-time solvable problems. In several cases, problems become $\np$-hard with non-unit jobs, therefore showing additional complexity stemming from scheduling, as opposed to standard voting. Moreover, our experiments suggest that there is a clearly visible difference between schedules returned by different methods of collective scheduling.

Both scheduling and social choice are well-developed fields with a plethora of models, methods and results.
%We deliberately chose to work on the simplest possible scheduling model to demonstrate the impact of social choice.
%While other cost functions can be used (e.g., Kendall-tau multiplied by the duration of the job),
%tardiness naturally extends to more involved scheduling models,
%involving multiple processors, parallel jobs, release dates or dependencies between jobs.
%In particular, on multiple processors the notion of Kendall-tau swap between jobs assigned to different processors is not so well defined.
It is natural to consider more complex scheduling models in the context of collective scheduling, such as processing several jobs simultaneously (multiple processors with sequential or parallel jobs),  jobs with different release dates or dependencies between jobs.
Each of these extensions raises new questions on computability/approximability of collective schedules.
Another interesting direction is to derive desired properties of collective schedules (distinct from PTA-Condorcet), and then formulate scheduling algorithms satisfying them.

\subsection*{Acknowledgments}
  This research has been partly supported by the Polish National
  Science Center grant Sonata (UMO-2012/07/D/ST6/02440), a Polonium
  grant (joint programme of the French Ministry of Foreign Affairs,
  the Ministry of Science and Higher Education and the Polish Ministry
  of Science and Higher Education) and project TOTAL that has
  received funding from the European Research Council (ERC) under the
  European Union's Horizon 2020 research and innovation programme
  (grant agreement No 677651). 

  Piotr Skowron was also supported by a Humboldt fellowship for postdoctoral researchers.

\bibliographystyle{abbrvnat}
\bibliography{collsched}

\end{document}